\newtheorem{property}{Property}
\def\poly{\text{poly}}
\newcommand{\lam}{\lambda}
\newcommand{\NP}{\textsf{NP}}
\newcommand{\DC}{\mathcal{C}}
\newcommand{\A}{\mathcal{A}}
\newcommand{\OC}{O_{\DC}}
\newcommand{\tup}[1]{\langle#1\rangle}
\newcommand{\oset}[3][0ex]{%
  \mathrel{\mathop{#3}\limits^{
    \vbox to#1{\kern-6\ex@
    \hbox{$\scriptstyle#2$}\vss}}}}
\newcommand{\defproblem}[3]{
  \begin{tcolorbox}[
    colback=white, % Background color
    colframe=black, % Border color
    width=\textwidth, % Box width
    boxrule=0.2mm, % Border thickness
    arc=5pt, % Corner radius
    left=2mm, % Left padding
    right=2mm, % Right padding
    top=2mm, % Top padding
    bottom=2mm % Bottom padding
  ]
    \begin{minipage}{0.95\textwidth}
    {#1}\\
    ~\\
    {\bf{Input:}}~#2  \\
    {\bf{Output:}}~#3
    \end{minipage}
  \end{tcolorbox}
}
\patchcmd{\@bibitem}{\doi}{\small\doi}{}{}
\newtheorem{theorem}{Theorem}[section]
\newtheorem{lemma}[theorem]{Lemma}
\newtheorem{definition}[theorem]{Definition}
\title{Mind the Gap.~Doubling Constant Parametrization of Weighted Problems:~TSP, Max-Cut, and More\footnote{To appear at STACS 2026.}}
\author{Mihail Stoian\\University of Technology Nuremberg}
\date{}
\begin{document}

\maketitle

\begin{abstract}
Despite much research, hard weighted problems still resist super-polynomial improvements over their textbook solution.
On the other hand, the unweighted versions of these problems have recently witnessed the sought-after speedups.
Currently, the only way to repurpose the algorithm of the unweighted version for the weighted version is to employ a polynomial embedding of the input weights. This, however, introduces a pseudo-polynomial factor into the running time, which becomes impractical for arbitrarily weighted instances.

In this paper, we introduce a new way to repurpose the algorithm of the unweighted problem. Specifically, we show that the time complexity of several well-known \textsf{NP}-hard problems operating over the $(\min, +)$ and $(\max, +)$ semirings, such as TSP, Weighted Max-Cut, and Edge-Weighted $k$-Clique, is proportional to that of their unweighted versions when the set of input weights has small doubling. We achieve this by a meta-algorithm that converts the input weights into polynomially bounded integers using the recent constructive Freiman's theorem by Randolph and Węgrzycki [ESA~2024] before applying the polynomial embedding.
\end{abstract}

\thispagestyle{empty}
\setcounter{page}{0}

\newpage

\section{Introduction}

The renaissance witnessed in exact algorithm design over the last two decades has one of its roots in the quest to answer the following fundamental question:
\begin{quote}
    \centering
    \itshape
    Are textbook solutions of \NP-hard problems the best we can hope for?
\end{quote}
Indeed, there are many problems that resist improvements over the standard algorithm.
Yet, we have arguments to be optimistic: The amount of work in this direction has led to impressive speedups for well-studied problems such as \textsc{Max-Cut}~\cite{maxcut-williams}, \textsc{Hamiltonicity}~\cite{hamiltonicity, below_all_subsets1}, \textsc{Sched}~\cite{below_all_subsets2}, \textsc{Independent Set}~\cite{below_all_subsets3}, \textsc{Dominating Set}~\cite{below_all_subsets3, below_all_subsets4}, and \textsc{Min $k$-Cut}~\cite{min_k_cut}. To some extent, these speedups have been achieved mainly for what we will refer to as \emph{unweighted} problems.
The \emph{weighted} versions of many of these problems are still stuck with the textbook solutions that accept only modest, mostly polynomial-time, improvements.

A concrete example is that of the Traveling Salesman Problem (TSP), which asks to find the shortest tour through $n$ cities.
The textbook algorithm in $O(2^n n^2)$-time remains, up to some modest polynomial improvements~\cite{chan_shaving_logs}, the best up to date for the general setting~\cite{bellman_dp,held_karp_dp, nederlof_tsp}.
In contrast, its unweighted counterpart, the \textsc{Hamiltonicity} problem, has been (non-trivially) sped up to $O^*(1.66^n)$-time by Bj\"orklund~\cite{hamiltonicity}.\footnote{The $O^*$-notation hides polynomial factors in the input size.}

Interestingly, virtually every new result on the unweighted version of a weighted problem has its own theorem stating that the original problem can be solved in the time of its unweighted counterpart,
\emph{yet} with an extra multiplicative pseudo-polynomial dependence on the largest input weight; this is known as the ``(polynomial) embedding technique''.
Notably, this situation also concerns TSP: Bj\"orklund proved that \textsc{Hamiltonicity} can be leveraged to solve TSP by paying the price of a near-linear time-factor depending on the largest edge weight $W$~\cite[Thm.~3]{hamiltonicity}. This becomes impractical for arbitrarily large~$W$.
The same holds true for the weighted \textsc{Max-Cut} problem:~its unweighted counterpart can be solved in $O^*(2^{\omega n/3})$-time~\cite{maxcut-williams, koivisto-correct}, where $\omega < 2.371339$ is the matrix multiplication exponent~\cite{curr-omega}, while its weighted version has to ``accept'' the additional pseudo-polynomial factor. On the other hand, if the weights are polynomially bounded, then the problematic factor becomes polynomial, and the running time of the weighted version matches, in the $O^*$-sense, that of the unweighted version.

\paragraph*{Research Question.} Given the ubiquity of the embedding technique, it is natural to ask whether the gap between the running times of the weighted and unweighted versions of a problem is overcome \emph{only} in the regime of polynomially bounded weights.
A negative answer would be consistent with the widespread adoption of the
embedding approach, with its inherent pseudo-polynomial dependence~\cite{maxcut-williams, koivisto-correct, fsc, set_partitioning, hamiltonicity, saving_space, Cygan_book}.
In contrast, a positive answer would re-open the door to a more systematic re-examination of the weighted vs.\ unweighted gap, in the spirit of the ``Losing Weights by Gaining Edges'' technique for node-weighted problems~\cite{losing-ws} or in terms of approximation guarantees~\cite{apx-ws}.

\subsection{Our Results}

Somewhat intriguingly, we show in this work that the polynomially bounded weights setting is \emph{not} the only one that closes the gap between the weighted and unweighted versions of a problem.
The meta-algorithm we introduce applies to a wide spectrum of \NP-hard combinatorial problems that operate over the $(\min, +)$ and $(\max, +)$ semirings.
Namely, it takes as input a weighted problem instance which guarantees that its weight set has \emph{small doubling}, and solves it in time proportional to that of its unweighted counterpart. The set of input weights $A$ is said to have small doubling if $|A + A| \leq \DC |A|$ for some constant $\DC > 1$, referred to as the \emph{doubling constant}.

To maintain notation consistency with prior work on \emph{unweighted} problems~\cite{esa-paper}, we prefix problem names with ``$\DC$-'' if the set of input weights has doubling constant $\DC$. We also write $\OC$ to note that we suppressed factors that depend only on $\DC$. Under this notation, all of our results conform to the following template:
\begin{restatable}[]{corollary}{DCTSP}
    \label{thm:dc_tsp}
    If \textsc{Hamiltonicity} can be solved in time $T(n)$, then $\DC$-\textsc{TSP} can be solved in time $\OC^*(T(n))$.
\end{restatable}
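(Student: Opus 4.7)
Let $I = (V, w)$ denote a $\DC$-\textsc{TSP} instance on $n$ vertices, whose edge weights lie in a set $A$ with $|A + A| \leq \DC |A|$. My plan is in two stages: first compress $A$ into a small range of integers via the constructive Freiman theorem, and then apply the standard polynomial-embedding reduction from \textsc{TSP} to \textsc{Hamiltonicity} on the compressed instance.

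For the compression, note that any tour uses exactly $n$ edges, so its cost is a sum of $n$ elements of $A$. I invoke the Randolph--Węgrzycki constructive Freiman theorem on $A$ to obtain an explicit embedding $\phi : A \to \{0, 1, \ldots, N\}$ with $N = \OC(\poly(n))$ which preserves the order of $n$-fold sums. Rewriting each edge weight $w(e)$ as $\phi(w(e))$ gives an instance $I'$ whose optimum tour coincides with that of $I$. For the polynomial embedding step, since the edge weights in $I'$ are nonnegative integers of magnitude at most $N = \OC(\poly(n))$, Björklund's reduction~\cite[Thm.~3]{hamiltonicity} solves $I'$ using $\tilde{O}(nN) = \OC^*(1)$ calls to a \textsc{Hamiltonicity} solver on an $n$-vertex graph. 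Chaining the two stages yields the overall running time $\OC^*(T(n))$.

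The subtle point, and the place where I expect most of the technical work to sit, is the order-preservation claimed in the first stage: a Freiman $n$-isomorphism by definition preserves only \emph{equalities} of $n$-sums, whereas correctness of the reduction relies on preservation of \emph{inequalities}. My plan is to argue that the Randolph--Węgrzycki construction factors through a generalized arithmetic progression containing $A$, which admits a natural monotone linearization into an integer interval; this linearization preserves inequalities of bounded-length sums. A secondary point is that $N$ must depend only polynomially on $n$ with the polynomial's degree bounded in terms of $\DC$, and not on $\max A$; this is possible precisely because the constructive Freiman theorem exploits the doubling structure of $A$ rather than its magnitude.
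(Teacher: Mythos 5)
Your high-level two-stage plan (constructive Freiman compression followed by the polynomial-embedding reduction to \textsc{Hamiltonicity}) is the same route the paper takes, and you correctly identify where the technical work lives: preserving the \emph{total order} of objective values after re-encoding, not just Freiman-isomorphism (equality) structure. You also correctly observe that the image size $N$ must depend only on $n$ and $\DC$, not on $\max A$; the paper gets $N = \lambda^{d(\DC)} v(\DC) |w(I)|$ with $\lambda \le n$ for TSP, which is $\OC(\poly(n))$.

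However, the resolution you sketch for the order-preservation issue has a genuine gap, and it is precisely the point the paper flags as nontrivial. You posit that the GAP ``admits a natural monotone linearization into an integer interval'' that is both additive and preserves inequalities of bounded-length sums. No such map exists in general: any additive injection from the (enlarged) coordinate box into an integer interval of the correct polynomial size is forced to be a mixed-radix encoding, whose induced order is \emph{lexicographic} in the coordinates. Lexicographic order need not agree with the order induced by the GAP evaluation $\sum_i x_i \ell_i$ --- the paper gives the concrete counterexample $x = \tup{3,10}$, where $\tup{1,2} \prec \tup{2,1}$ lexicographically yet $3\cdot 2 + 10\cdot 1 < 3 \cdot 1 + 10 \cdot 2$. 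The paper therefore deliberately does \emph{not} try to make the encoding $\kappa$ monotone. Instead, $\kappa$ is only required to be an additive monomorphism (Lemma~\ref{lemma:pairing}), and the order is restored externally: one precomputes a permutation $\pi$ by enumerating and sorting all values $\sum_i x_i \ell_i$ over the enlarged box $\prod_i [0,\lambda L_i]$ (time $\tilde{O}(\lambda^{d(\DC)}\prod_i(L_i+1)) = \OC^*(1)$), and the algebraic algorithm consults $\pi$ only at the final step when it must select the minimum-exponent monomial from the solution polynomial. This works because the extraction of the optimum is a single, isolated rank query; all intermediate steps of the algebraic algorithm use only additivity, which $\kappa$ does preserve. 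To close your proof, replace the ``monotone linearization'' step by this non-monotone $\kappa$ plus the sorted permutation $\pi$, and note that the \textsc{Hamiltonicity}-based algorithm you invoke must be the \emph{algebraic} bounded-input version (Condition~2 of Property~$\phi$) so that $\pi$ can be plugged into its final argmin.

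A minor point: Björklund's reduction incurs a multiplicative $\tilde{O}(W)$ time overhead inside a single algebraic run, not $\tilde{O}(nW)$ black-box calls to a \textsc{Hamiltonicity} solver; the distinction matters because the permutation trick must be threaded through the algebraic algorithm, which a black-box oracle interface would not permit.
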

The applicability of our work goes beyond TSP. Indeed, we can consider any weighted \NP-hard problem that operates over the $(\min,+)$ and $(\max,+)$ semirings, such as the \textsc{Weighted Max-Cut}, \textsc{Edge-Weighted $k$-Clique}, or \textsc{Minimum Steiner Tree} problems.\footnote{We kindly refer the reader to the relevant section for the problem definitions.} The only requirement is that the problem at hand satisfies a property~$\phi$:~its objective value is an additive combination of input weights, and it admits an (algebraic) algorithm as seen earlier; see Prop.~\ref{prop:phi} for a formal treatment.

To avoid redundancy across different problems, we present our results under the umbrella of a meta-algorithm. Namely, given a weighted problem $\mathbf{P}_w$ that satisfies the above property~$\phi$, its small-doubling counterpart can be solved in time matching that of the unweighted version:

\begin{restatable}[Meta-algorithm]{theorem}{METATHM}
    \label{thm:meta}
    If problem $\mathbf{P}_w$ satisfies property $\phi$ and the unweighted version can be solved in time $O(T(n))$ by an algebraic algorithm $\A$, then $\DC$-$\mathbf{P}_w$ can be solved in time $\OC^*(T(n))$.
\end{restatable}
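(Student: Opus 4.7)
The plan is to reduce any small-doubling instance of $\mathbf{P}_w$ to an equivalent one whose weights are polynomially bounded integers, and then invoke the polynomial embedding baked into property~$\phi$ (namely, the algebraic structure of the unweighted $T(n)$-time algorithm). The bridge between small doubling and polynomial-integer weights is the constructive Freiman theorem of Randolph and Węgrzycki: applied to the input weight set $A$ of size $m = \poly(n)$, it returns in polynomial time a generalized arithmetic progression (GAP) $P = \{a_0 + \sum_{i=1}^d x_i v_i : 0 \le x_i < L_i\}$ with $A \subseteq P$, dimension $d = \OC(1)$, and volume $\prod_i L_i = \OC(m)$. In particular each $L_i$ is $\OC(\poly(n))$, and the output also supplies, for every $a \in A$, an explicit coordinate vector $(x_1(a),\dots,x_d(a))$ witnessing $a = a_0 + \sum_i x_i(a)\, v_i$.

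Next I would linearise these $d$-dimensional coordinates into a single polynomial-size integer. By property~$\phi$, the objective value of any feasible solution is a sum of $k = O(n)$ input weights, so in every feasible sum the coordinate aggregates $\sum_\ell x_i(a_{j_\ell})$ stay bounded by $k L_i$. Choosing $M$ strictly larger than $k L_i$ for all~$i$, the encoded weights $\tilde w(a) := \sum_{i=1}^d x_i(a)\, M^{i-1}$ satisfy $\tilde w(a) \le M^d = \OC(\poly(n))$, and the sum $\sum_\ell \tilde w(a_{j_\ell})$ admits no carries across digits. From this one integer sum one therefore recovers the coordinate vector $(c_1,\dots,c_d)$ exactly, and with it the true semiring value $k a_0 + \sum_i c_i v_i$ of the underlying solution. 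With this preparation, I would invoke the polynomial embedding that property~$\phi$ makes available on the algebraic $T(n)$-time algorithm for the unweighted problem: for every target integer $S \in [0,\, k \cdot \max_a \tilde w(a)] = [0,\, \poly(n)]$ it decides whether a feasible solution of integer weight exactly~$S$ exists, at total cost $\OC^*(T(n))$. Scanning the achievable values of~$S$, decoding each to its coordinate vector, and evaluating the resulting $k a_0 + \sum_i c_i v_i$ then yields the desired optimum.

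The hard part is the apparent tension between the linear integer encoding, which preserves sums but not order, and the $(\min,+)$ / $(\max,+)$ semiring, which needs order on sums. The resolution lies precisely in property~$\phi$: we first use the algebraic algorithm to identify \emph{which} integer sums are feasible, and only afterwards compare their decoded true values. Everything else is bookkeeping: Freiman flattens the unbounded numerical range of $A$ to an $\OC(1)$-dimensional GAP of polynomial volume, base-$M$ encoding flattens the multidimensionality into a single integer, and the polynomial embedding internal to the unweighted algorithm contributes only a polynomial factor, keeping the overall running time at $\OC^*(T(n))$.
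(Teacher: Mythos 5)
Your proposal is essentially the paper's approach: apply constructive Freiman to obtain a low-dimensional GAP, encode the GAP coordinates into a single polynomially bounded integer via a carry-free positional scheme, run the algebraic bounded-input algorithm guaranteed by property~$\phi$ on these encoded weights, and handle the fact that the encoding preserves sums but not value order by examining the full solution polynomial rather than blindly taking the extremal exponent. The paper packages the last two steps slightly differently---a mixed-radix pairing function $\kappa$ with per-dimension radices $\lambda L_i + 1$ instead of your uniform base $M$, and a precomputed permutation $\pi$ handed to $\mathcal{A}$ for its final min/max selection instead of your post-hoc scan over achievable exponents---but these are cosmetic and give the same $\OC^*(T(n))$ bound. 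One small slip worth noting: you write the GAP in affine form $a_0 + \sum_i x_i v_i$ and recover the true objective as $k a_0 + \sum_i c_i v_i$, implicitly treating the number $k$ of summed weights as known; for applications such as \textsc{Weighted Max-Cut} or \textsc{Minimum Steiner Tree}, $k$ varies across feasible solutions and is not determined by the coordinate aggregates $(c_i)_i$. This is harmless only because the constructive Freiman theorem as stated in the paper returns a homogeneous GAP ($a_0 = 0$); if one insists on the affine form, one must reserve an extra digit of the encoding to track the multiplicity $k$.
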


The following corollaries follow:

\begin{restatable}[]{corollary}{DCMAXCUT}
    \label{thm:dc_maxcut}
    If unweighted \textsc{Max-Cut} can be solved in time $T(n, m)$, then $\DC$-\textsc{Weighted Max-Cut} can be solved in time $\OC^*(T(n, m))$.
\end{restatable}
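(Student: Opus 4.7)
The plan is to derive this corollary as a direct instantiation of the meta-algorithm in Theorem~\ref{thm:meta}. Since the meta-theorem already carries out the weight-rewriting via Freiman's theorem and the polynomial embedding, the only real work is to verify that \textsc{Weighted Max-Cut} satisfies property $\phi$ of Proposition~\ref{prop:phi}; once this is in hand, the bound $\OC^*(T(n,m))$ follows immediately.

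To check the two ingredients of $\phi$, I would first observe that on an edge-weighted graph $G=(V,E,w)$ the objective is $\max_{S \subseteq V} \sum_{e \in \delta(S)} w(e)$, a sum of at most $m$ input edge weights and hence an additive combination of weights in the required sense. Second, the algorithm of Williams~\cite{maxcut-williams} (with the correction by Koivisto~\cite{koivisto-correct}) for unweighted \textsc{Max-Cut} is inherently algebraic: its core is a matrix product that, when edge weights are polynomially bounded, lifts transparently to an evaluation over the $(\max,+)$ semiring with only a polynomial overhead in the largest weight. This is precisely the form required by $\phi$.

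With $\phi$ established, Theorem~\ref{thm:meta} applies as a black box: it rewrites the input weight set $A$ (satisfying $|A+A| \leq \DC |A|$) into polynomially bounded integers via the Randolph--W\k{e}grzycki constructive Freiman theorem, invokes the polynomial-weight \textsc{Max-Cut} procedure on the transformed instance, and reads back the optimum with the appropriate affine correction. The total running time is $\OC^*(T(n,m))$, with the dependence on $\DC$ absorbed into the hidden factors.

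The main step requiring genuine problem-specific care, rather than being purely mechanical, is the verification that Williams' unweighted algorithm indeed carries over to polynomially-bounded weights within the $(\max,+)$ semiring; this is the one piece not handled by the meta-theorem itself, but it is well-established in the literature on the embedding technique~\cite{maxcut-williams, koivisto-correct}. Everything else \textsc{Max-Cut}-specific is a one-line check about the additivity of the cut function.
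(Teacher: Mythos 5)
Your proposal matches the paper's proof essentially verbatim: the paper also proves a one-line lemma that \textsc{Weighted Max-Cut} satisfies property~$\phi$ (cut weight is a sum of edge weights; take $\mathcal{A}$ to be the Williams/Koivisto algebraic algorithm with its $O^*(2^{\omega n/3}\,W)$ bounded-input extension) and then invokes Theorem~\ref{thm:meta} as a black box. No gap.
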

\begin{restatable}[]{corollary}{DCEWCLIQUE}
    \label{thm:dc_ewclique}
    If $k$-\textsc{Clique} can be solved in time $T(n,m,k)$, then $\DC$-\textsc{Edge-Weighted $k$-Clique} can be solved in time $\OC^*(T(n, m, k))$.
\end{restatable}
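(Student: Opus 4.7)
}
The plan is to derive this corollary as a direct instance of the meta-algorithm (Theorem~\ref{thm:meta}), so the work reduces to verifying that \textsc{Edge-Weighted $k$-Clique} satisfies property~$\phi$ promised in Proposition~\ref{prop:phi}. First, I would argue the \emph{additive decomposability} of the objective: a feasible solution is a $k$-clique $K \subseteq V(G)$, and its value is
\[
    \sum_{\{u,v\} \in \binom{K}{2}} w(uv),
\]
that is, a sum of exactly $\binom{k}{2}$ input weights drawn from the edge-weight set $A$. In particular, the optimum is a fixed-size additive combination of elements of $A$, which is precisely the shape required for the polynomial embedding to preserve the optimum after the weights have been mapped to polynomially bounded integers.

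Next, I would verify the \emph{algebraic} side of property~$\phi$: the unweighted $k$-\textsc{Clique} problem admits algorithms (e.g.\ the matrix-multiplication-based Ne\v{s}et\v{r}il--Poljak routine, or the specialised algorithm assumed in the corollary's hypothesis) that can be cast over a semiring. The decision version for $k$-\textsc{Clique} is naturally implemented over $(\vee,\wedge)$ and lifts to $(\min,+)$ / $(\max,+)$ by replacing Boolean adjacency indicators with the corresponding edge weights, turning the algorithm into one that outputs the minimum-weight (resp.\ maximum-weight) $k$-clique in time $O^*(T(n,m,k) \cdot \poly(W))$ whenever weights are integers bounded by~$W$. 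This is exactly the hypothesis of the polynomial-embedding step invoked inside Theorem~\ref{thm:meta}.

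With $\phi$ established, I would apply Theorem~\ref{thm:meta}. Given an instance whose weight set $A$ has doubling constant $\DC$, the meta-algorithm first invokes the constructive Freiman theorem of Randolph and W\k{e}grzycki to embed $A$ into an integer interval of size $\poly(|A|)$ while preserving all additive relations of bounded length; this preservation is what makes the $\binom{k}{2}$-fold sums that determine optima on the original and embedded instances coincide in rank. Then it runs the polynomial-embedding algorithm on the transformed instance in time $\OC^*(T(n,m,k))$, paying only $\DC$-dependent factors hidden by $\OC^*$.

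The main obstacle I anticipate is the verification step for the semiring/algebraic requirement of~$\phi$: one must be careful that the unweighted $k$-\textsc{Clique} algorithm whose running time $T(n,m,k)$ we wish to inherit is indeed of the algebraic type that tolerates replacement of $\{0,1\}$ entries by polynomially bounded integer weights without more than polylogarithmic overhead. For standard matrix-multiplication-based solvers this is routine, but for any combinatorial improvements hiding in $T(n,m,k)$ one has to check compatibility; once this is confirmed, the remainder is a direct instantiation of Theorem~\ref{thm:meta}.
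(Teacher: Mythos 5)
Your high-level plan (verify property~$\phi$, then invoke Theorem~\ref{thm:meta}) matches the paper, and your treatment of Condition~\ref{cond:solution} is fine: a $k$-clique's value is a sum of $\binom{k}{2}$ edge weights. The gap is in your verification of Condition~\ref{cond:algo}. You assert that the unweighted $k$-\textsc{Clique} algorithm ``lifts to $(\min,+)$/$(\max,+)$ by replacing Boolean adjacency indicators with the corresponding edge weights,'' and that this is ``routine'' for matrix-multiplication-based solvers. This glosses over the actual technical work: Nešetřil--Poljak does not run matrix multiplication on the adjacency matrix of $G$ itself, but on an auxiliary graph $H$ whose vertices are $k$-cliques of $G$, reducing $3k$-clique in $G$ to triangle detection in $H$. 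Transferring weights to $H$ is not a substitution — one must decide how to distribute the $\binom{3k}{2}$ original edge weights across the three auxiliary edges of a triangle. The paper's Lemma~\ref{lemma:np} does this by splitting the weight of a $3k$-clique into ``internal'' node weights $w'(Y)$ and ``cross'' edge weights, then folding the node weights into the edge weights and deliberately doubling the cross weights so that the triangle weight equals exactly $2W_{\mathrm{clique}}$; the factor of $2$ is then accounted for. The paper explicitly flags that this had not been written down before (only mentioned in passing by Lincoln et al.), so it cannot be dismissed as routine.

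A related omission: Condition~\ref{cond:algo} requires bounding the parameter $\lambda$, the size of the multi-set of input weights that any intermediate value can sum to; this drives the enlargement of the GAP dimension bounds (line~\ref{line:3} of Alg.~\ref{algo:meta}) and the bound in Eq.~\eqref{eq:ineq}. The paper reads off $\lambda \le 2|w(I)|$ directly from the doubling in the Nešetřil--Poljak construction. Your proposal never identifies $\lambda$ for this problem, so you have not actually established that the meta-theorem's hypotheses are met. To repair the argument you would need to state and prove something equivalent to Lemma~\ref{lemma:np} and extract $\lambda$ from it, rather than appealing to a generic semiring-replacement heuristic.
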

\begin{restatable}[]{corollary}{DCSTEINER}
    \label{thm:dc_steiner}
    If the unweighted version of \textsc{Minimum Steiner Tree} can be solved in time $T(n, m, k)$, then $\DC$-\textsc{Minimum Steiner Tree} can be solved in time $\OC^*(T(n, m, k))$.
\end{restatable}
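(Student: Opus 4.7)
The plan is to derive this corollary as a direct instantiation of the meta-algorithm (Theorem~\ref{thm:meta}): I would only need to verify that \textsc{Minimum Steiner Tree} satisfies property~$\phi$, namely that (i) the objective is an additive combination of input weights and (ii) it admits an algebraic algorithm over the $(\min,+)$ semiring. Once both are in place, plugging the assumed running time $T(n,m,k)$ for the unweighted version into the statement of Theorem~\ref{thm:meta} yields the claimed $\OC^*(T(n,m,k))$ bound.

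For the additive-objective part, I would recall that the cost of a Steiner tree is simply the sum of the edge weights it uses, so the objective value is a sum of input weights drawn from the edge-weight set~$A$; this is immediate. For the algebraic-algorithm part, my plan is to invoke a Dreyfus--Wagner-style dynamic program: letting $D[S,v]$ denote the minimum weight of a Steiner tree containing the terminal set $S$ and rooted at $v$, one has a recurrence of the form
\[
D[S,v] \;=\; \min\Bigl(\min_{\emptyset \neq S' \subsetneq S} \bigl(D[S',v] + D[S\setminus S',v]\bigr),\; \min_{u \in V} \bigl(D[S,u] + d(u,v)\bigr)\Bigr),
\]
where every combining operation is either $\min$ or $+$. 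This is exactly an algebraic computation over the $(\min,+)$ semiring, matching the template required by property~$\phi$.

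With~$\phi$ verified, I would simply cite Theorem~\ref{thm:meta} with $\mathbf{P}_w = \textsc{Minimum Steiner Tree}$ and time complexity $T(n,m,k)$ of the unweighted version, concluding that $\DC$-\textsc{Minimum Steiner Tree} is solvable in $\OC^*(T(n,m,k))$. The main obstacle I anticipate is not in the invocation itself but in being precise about what ``the unweighted version of \textsc{Minimum Steiner Tree}'' means within the meta-algorithm's interface: one must ensure that the polynomial-embedding step, which the meta-algorithm composes with Freiman-style rectification of the weight set, is compatible with a problem whose unweighted baseline solves a $(\min,+)$ instance with unit weights rather than a Boolean decision problem. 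I would handle this by pointing out that the Dreyfus--Wagner recurrence above remains algebraic regardless of the weight magnitudes, so the embedding reduction of Theorem~\ref{thm:meta} applies verbatim and no additional problem-specific argument is required.
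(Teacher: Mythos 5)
Your high-level plan (verify property~$\phi$, then invoke Theorem~\ref{thm:meta}) is the same as the paper's, and your Condition~\ref{cond:solution} check is essentially fine (the paper just notes additionally that, since weights are non-negative, the optimal subgraph can be taken to be a tree with leaves in~$K$). The gap is in your verification of Condition~\ref{cond:algo}. You offer the Dreyfus--Wagner recurrence as the witness algebraic algorithm~$\mathcal{A}$, on the grounds that every combining operation is a $\min$ or a $+$. But ``algebraic'' in this paper is a specific technical term (Def.~\ref{def:sol-poly} and the surrounding discussion): it means the algorithm is run over the $(+,\times)$ ring on polynomial-encoded weights and returns a \emph{solution polynomial}, which is exactly what produces the $O^*(T(n)\cdot W)$ form required by Condition~\ref{cond:algo}. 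Dreyfus--Wagner as written is a number-valued $(\min,+)$ DP; it does not return a solution polynomial, and in fact it already solves the weighted problem in $O(3^k n + 2^k n^2 + nm)$ time with \emph{no} pseudo-polynomial $W$ factor whatsoever, so there is nothing for the meta-algorithm to shave. If you push DW through polynomial embedding, the subset-merge step $\min_{S'\subsetneq S}\bigl(D[S',v]+D[S\setminus S',v]\bigr)$ costs $O^*(3^k)$ polynomial multiplications, giving a $T(n,m,k)=O^*(3^k)$ baseline, which is strictly worse than what the unweighted problem actually admits.

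The paper instead instantiates $\mathcal{A}$ with the fast-subset-convolution algorithm of Bj\"orklund, Husfeldt, Kaski, and Koivisto (or, alternatively, the Lokshtanov--Nederlof polynomial-space algorithm), which computes the Dreyfus--Wagner recurrence via zeta/M\"obius transforms on the solution polynomials and runs in $O^*(2^k n^2 W)$. That algorithm is genuinely algebraic in the paper's sense, has exactly the $T(n)\cdot W$ running-time profile demanded by property~$\phi$ with $T$ matching the best-known unweighted time, and satisfies the multi-set-support requirement of Condition~\ref{cond:algo}. Without replacing Dreyfus--Wagner by this (or an equivalent) algebraic algorithm, your proof does not establish Condition~\ref{cond:algo}, and the final invocation of Theorem~\ref{thm:meta} is not justified.
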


We outline in the following how to obtain the results.
Throughout the paper, we refer to the algebraic algorithm that exhibits the pseudo-polynomial dependence on $W$ as the \emph{bounded-input} algorithm.

\subsection{Overview of Approach}\label{sec:overview}

Let us provide some intuition of our approach before we continue with the general framework.

\paragraph*{Intuition.} Consider TSP as a running example. Recall that the bounded-input algorithm works fine as long as $W$ is small and the weights are in $[0, W]$. Intuitively, we can ``hack'' the algorithm to also work with weights that are a subset of a \emph{simple} arithmetic progression $\{V, V + r, \ldots, V + W \cdot r\}$, particularly for large values of $V$ and $r$, namely: Map the weights to $[0, W]$ by subtracting $V$ and dividing by $r$, compute the optimal tour as before on $[0, W]$, and, finally, recover the original objective by scaling by $r$ and adding $nV$. Thus, as long as the arithmetic progression is small, the algorithm still works. Our meta-algorithm lifts this idea to weights lying in a small \emph{generalized} arithmetic progression---a structure that captures small-doubling sets---as follows.

\paragraph*{Meta-Algorithm.} Let $A$ denote the set of input weights of the problem at hand, such as the edge weights in TSP. The first observation is that the objective value of the problem is contained in a $q$-fold sumset of $A$, i.e., $A + \ldots + A$ with $A$ occurring $q$ times, where $q$ is problem-specific; for TSP we have $q = n$. The second observation is that if $A$ has small doubling, then the cardinality of this very $q$-fold sumset is polynomial w.r.t.~$\OC$.\footnote{Noteworthy, Pl\"unnecke's inequality is too loose in this setting, leading to an exponential bound.} This suggests that, \emph{in principle}, we could run the bounded-input algorithm. However, the values may still be numerically as large as $\Theta(q \cdot \max A)$. The missing piece is a decomposition of the weights into constant-size coefficient-tuples coming from the generalized arithmetic progression $A$ is a subset of (cf.~Freiman's theorem~\cite{freiman1}).

To be able to apply this to any bounded-input algorithm out of the box, we must ensure that the coefficient representation preserves the total order of the original weights. We ensure this by designing an order-preserving monomorphism from the coefficient-tuple space into $\{0, \ldots, |A'| - 1\}$, where $q A \subseteq A'$ and $|A'|$ remains polynomial w.r.t.~$\OC$.

\subsection{Related Work}\label{sec:related_work}

We survey applications of additive combinatorics to algorithm design and the literature on the unweighted-weighted dichotomy, focusing on the polynomial embedding technique.

\paragraph*{Additive Combinatorics in Algorithm Design.} Our work continues the effort to consider important results from additive combinatorics~\cite{tao_vu} as a new toolbox for algorithm design, a connection highlighted in works such as Ref.~\cite{trevisan_ac, ac_cs_view3, ac_cs_view1, ac_cs_view2}.
One of the first seminal results is that of Chan and Lewenstein~\cite{chan-lewenstein}, who used the (constructive) Balog-Szemer\'edi-Gowers (BSG) theorem to speed up the monotone variant of the min-plus convolution on small integers. Recently, this has been further used and extended in \#APSP~\cite{fredman_meets_domi}, 3SUM~\cite{3sum_sidon}, listing 4-cycles~\cite{abf}, and approximate distance oracles~\cite{abf}. We observe that the constructive version of a fundamental theorem in additive combinatorics bears fruit in algorithm design. To this end, Randolph and W\k{e}grzycki~\cite{esa-paper} made Freiman's theorem~\cite{freiman1}, another fundamental result in additive combinatorics, constructive. Their main applications were to \textsc{SubsetSum} and \textsc{ILP-Feasibility}, which have an inherent additive structure.
Indeed, we extend their results to the area of weighted problems.

\paragraph*{Embedding Technique.} Repurposing the unweighted algorithm for the weighted version of a problem is a common theme in work that improves over the textbook solution,
as in seminal results such as Ref.~\cite{maxcut-williams, fsc, set_partitioning, hamiltonicity}; notably, in all these examples, the weighted problem operates over the $(\min, +)$ and $(\max, +)$ semirings.
The key idea is to represent the input weights of the problem instance as monomials so that the bounded-input algorithm can work in the $(+, \times)$ ring.
The drawback is that this introduces a pseudo-polynomial dependence in the running time,
since operations now incur an additional $\widetilde O(W)$ factor,\footnote{The $\widetilde O$-notation hides polylogarithmic factors in the input size.} where $W$ is the largest input weight.
For instance, undirected \textsc{Hamiltonicity} runs in $O^*(1.66^n)$-time, hence TSP can be solved in $O^*(1.66^n W)$-time. This, however, becomes impractical for arbitrarily large $W$. Still, if $W$ is polynomially bounded, then up to polynomial factors, the weighted problem becomes equivalent to the unweighted one. For example, if $W = \poly(n)$, then \textsc{TSP} can be solved in time $O^*(1.66^n)$, matching its unweighted counterpart.

\paragraph*{Dodging Pseudo-Polynomiality.} The pseudo-polynomial factor is often unwanted. One way to shave it is to enter the realm of approximation. In fact, there is a simple exercise to turn any bounded-input algorithm for a problem into an exponential-time $(1+\varepsilon)$-approximation algorithm. For the \textsc{Max-Cut} problem, Williams mentions this idea in passing~\cite{maxcut-williams}, inspired by the literature on the all-pairs shortest paths problem~\cite{zwick_apsp}; see, e.g., Ref.~\cite{approx-min-sum-conv} for more such examples. It is interesting to ask whether the doubling constant could play a role in approximation algorithms.

\subsection{Roadmap}

We structure the paper as follows. We start with preliminaries on the embedding technique (Sec.~\ref{subsec:embedding}) and the necessary notation in additive combinatorics (Sec.~\ref{subsec:ac}). Then, we introduce the meta-algorithm (Sec.~\ref{sec:meta-algo}), including the order-preserving monomorphism (Sec.~\ref{sec:kappa}). Subsequently, we instantiate the meta-algorithm on several weighted problems (Sec.~\ref{sec:apps}). We conclude with an outlook on polynomial-time weighted problems (Sec.~\ref{sec:outlook}).

\section{Preliminaries}

\paragraph*{Notation.} For $a < b \in \mathbb{Z}$, define $[a, b] \triangleq \{a, \ldots, b\}$. As shorthand, we define $[n] \triangleq [1, n]$. Moreover, we define $f(X)$ as $\{f(x) \mid x \in X\}$ for a function $f$ and set $X$.

\paragraph*{Time Complexities.} As we will only deal with exponential-time algorithms, we use the $O^*$-notation to hide polynomial factors in the input size.
For example, $O^*(2^n) = O(2^n \poly(n))$.
In particular, this notation will never hide pseudo-polynomial factors depending on the input weights; specifically, we mean the largest input weight, which will be denoted by $W$.
Furthermore, we use the $\OC$-notation, already established in Ref.~\cite{esa-paper}, to denote that we suppressed factors depending on $\DC$, the doubling constant on which we parameterize the time-complexities.
For instance, $\OC(2^n) = f(\DC) \cdot O(2^n)$ for a computable function $f$.
To put into context, $\OC^*(2^n)$ will thus mean $f(\DC) \cdot O^*(2^n)$.

\paragraph*{Graphs.} Unless otherwise specified, all problems we consider in this work operate on an undirected graph $G = (V, E)$, most of the time endowed with a weight function $w$ defined on the edge set $E(G)$ that attributes each edge $e$ a positive value.
The length of a path $\pi = (v_1, \ldots, v_k)$ is $k - 1$, i.e., we count the number of edges, while the weight of the path is the sum of all edge weights, i.e., $w(\pi) \vcentcolon= \sum_{i=1}^{k-1} w(v_i, v_{i+1})$.

\subsection{Embedding Technique}\label{subsec:embedding}

We present the embedding technique, which is used by several works to leverage the running time of the unweighted problem for the corresponding weighted counterpart, at the cost of a pseudo-polynomial dependence on the largest input weight, $W$. Specifically, we consider \NP-hard problems that admit formulations over the $(\min, +)$ and $(\max, +)$ semirings: combining substructures corresponds to addition, while selecting among candidate values corresponds to taking a minimum (for minimization) / maximum (for maximization).

\paragraph*{Overview.} The key idea is to encode weights as exponents in a polynomial ring, since (i) the ``$+$'' operator of our semirings translates to multiplication between monomials ($x^{a+b}=x^{a}\cdot x^{b}$), and (ii) aggregating candidate values corresponds to polynomial addition, which retains information about all previously attainable values.\footnote{If the reader will, this operation ``simulates'' an additive inverse.} Hence, the output of the bounded-input algorithm is a \emph{solution polynomial}:

\begin{definition}[Solution polynomial]\label{def:sol-poly}
    Let $I_w$ be an instance of a weighted combinatorial problem with a finite set $\mathcal{F}(I_w)$ of feasible solutions. For each $\sigma\in\mathcal{F}(I_w)$, let $v(\sigma)\in\mathbb{Z}_{\ge 0}$ denote its objective value. The \emph{solution polynomial} of $I_w$ is the ordinary generating function
    \begin{equation*}
        F_{I_w}(x) = \sum_{\sigma \in \mathcal{F}(I_w)} x^{v(\sigma)} = \sum_{s = 0}^{V_{\max}} m_{I_w}(s) \, x^{s},
    \end{equation*}
    where $m_{I_w}(s)\coloneqq \bigl|\{\sigma\in\mathcal{F}(I_w) \mid v(\sigma)=s\}\bigr|$ and $V_{\max}\coloneqq \max_{\sigma\in\mathcal{F}(I_w)} v(\sigma)$.
\end{definition}
The solution polynomial abstracts away whether the problem is a minimization or maximization problem: for minimization we select the monomial with the smallest exponent, whereas for maximization we select the monomial with the largest exponent. To this end, we refer to an algorithm as \emph{algebraic}, if it returns a solution polynomial for an instance of a problem.\footnote{An alternative attribute could be \emph{non-combinatorial}. For a treatment on the definition of a combinatorial algorithm, see Ref.~\cite{fast-bmm}.}

Due to the additive structure of the input weights we will be exploiting, starting in Sec.~\ref{sec:meta-algo}, we will have to work with coordinate tuples instead of the standard setting of integer weights. To overcome this, we will propose a pairing function that transforms tuples into equivalent integer weights; see Sec.~\ref{sec:kappa} for the technical details.

\subsection{Additive Combinatorics}\label{subsec:ac}

As the trend of importing results from additive combinatorics into algorithm design is itself rather young,
we provide the necessary notation so that our algorithms can be easily followed.

\paragraph*{Basic Notation.} Let $A$ and $B$ be two sets.
Define their sumset (also known as the Minkowski sum) $A + B$ in a commutative group as $\{a + b \mid a\in A, b\in B\}$.
When referring to $A + A$, we directly note $2A$. Hence, the iterated sumset $h$-fold $hA$ is defined recursively by $hA = (h - 1)A + A$.

\paragraph*{Doubling Constant.} An important definition is that of the doubling constant of a set $A$, which quantifies how large the sumset becomes w.r.t.~to the original set:
\begin{equation}
    \DC(A) = \frac{|A + A|}{|A|}.
    \label{eq:dc}
\end{equation}
Whenever $\DC(A)$ does not depend on the size of $A$, we directly write $\DC$. We say that these sets have \emph{small doubling}.
Our work shows that whenever the set of input weights of \NP-hard problems has small doubling, that instance can be solved faster.

This comes as a surprise given the simple definition of the doubling constant in Eq.~\eqref{eq:dc}.
To gain an intuitive understanding, consider the case when $A$ is a (simple) arithmetic progression, say $\{2, 4, 6, 8\}$.
Its 2-fold sumset, $A + A$, is $\{4, 6, 8, 10, 12, 14, 16\}$.
Taking a more general example will lead us to conclude that the sumset of a (simple) arithmetic progression does not explode that much.
Indeed, $|A + A| = 2|A| - 1$, whenever $A$ is a (simple) arithmetic progression.
On the contrary, the set $\{3, 5, 9, 17\}$ results in a sumset that attains the maximum size for a 4-size set, namely 10 elements.
In other words, there are no distinct pairs in $A \times A$ that have the same sum.

\paragraph*{GAPs.} Apart from the well-known \emph{simple} arithmetic progression, additive combinatorics results target \emph{generalized} arithmetic progressions (GAP).
Formally, a GAP $G$ is defined as \[
    G = \{x_1\ell_1 + \ldots + x_d\ell_d \mid \forall i\in[d], \ell_i \in [0, L_i]\},
\]
where $d$ is the dimension of the GAP, $\{x_1, \ldots, x_d\}$ are the generators, and $\{L_1, \ldots, L_d\}$ are the dimension bounds; in particular, a 1-dimensional GAP is a simple arithmetic progression.

In our previous examples, we saw that if $A$ \emph{is} an arithmetic progression, then it has a small doubling constant;
the interesting problems in additive combinatorics are indeed the \emph{inverse} problems:~\emph{What does a set of small doubling look like?} This is where Freiman's theorem comes into play.

\paragraph*{Freiman's Theorem.} Given a set $A$ of small doubling, Freiman's theorem~\cite{freiman1} tells that, indeed, $A$ is included in a GAP $G$, whose dimension is independent on $|A|$.
In this work, we will need the \emph{constructive} Freiman's theorem, recently designed by Randolph and W\k{e}grzycki~\cite{esa-paper}:
\begin{theorem}[Constructive Freiman's Theorem~\cite{esa-paper}]
    \label{thm:con_freiman}
    Let $A$ be a set of $n$ integers with $|A + A| \leq \DC |A|$. Then, there exists an $\tilde{O}_\DC(n)$-time algorithm that with probability $1 - n^{-\gamma}$, for an arbitrarily large constant $\gamma > 0$, returns a generalized arithmetic progression
         \[
             G = \{x_1\ell_1 + x_2\ell_2 + \dots + x_{d(\DC)}\ell_{d(\DC)} \mid \forall i, \ell_i \in [L_i] \} \supseteq A
         \]
    with dimension $d(\DC)$ and volume $v(\DC)|A|$, where $d$ and $v$ are computable functions that depend
    only on the doubling constant $\DC$. Specifically, the theorem provides the values $x_1, x_2, \dots, x_{d(\DC)}$ and $L_1, L_2, \dots L_{d(\DC)}$. 
\end{theorem}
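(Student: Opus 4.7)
The plan is to follow Ruzsa's modern proof of Freiman's theorem, making each step algorithmic in $\tilde{O}_\DC(n)$ time. Before anything else, I would invoke the Plünnecke-Ruzsa inequality to bound $|kA - kA| \leq \DC^{O(k)}|A|$ for every constant $k$ we use, so that every intermediate sumset encountered has size $O_\DC(|A|)$; without this, no near-linear running time is possible. I would then construct a random Freiman $s$-isomorphism (for $s = s(\DC)$ large enough to transport $d$-fold sum relations faithfully) that embeds $A$ into $\mathbb{Z}/N\mathbb{Z}$ with $N = O_\DC(|A|)$, using a random prime together with a random multiplier; the absence of $s$-fold sum collisions can be certified by hashing, and the standard collision bound yields the $1 - n^{-\gamma}$ success probability stated in the theorem.

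Next, inside the dense model $\mathbb{Z}/N\mathbb{Z}$, I would apply Bogolyubov's lemma to show that $2A' - 2A'$, where $A'$ is the image of $A$, contains a Bohr set $B(\Gamma, \rho)$, with $\Gamma$ the set of characters on which the Fourier transform of the indicator of $A'$ is large, $|\Gamma| = O_\DC(1)$, and $\rho = \Omega_\DC(1)$. Algorithmically, since $|\Gamma|$ is of constant size, $\Gamma$ can be located by a sparse-Fourier / heavy-hitter pass over the dense model. The Bohr set then contains a GAP whose generators $x_1, \ldots, x_d$ and side-lengths $L_1, \ldots, L_d$ come from an LLL-style reduction of the lattice dual to $\Gamma$; its dimension is $d(\DC) = |\Gamma|$ and its volume is $\Omega_\DC(\rho^{|\Gamma|} N) = v(\DC)|A|$. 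Pulling back through the Freiman isomorphism yields a GAP in $\mathbb{Z}$ that contains the original $A$, with exactly the claimed parameters.

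The main obstacle will be the third step: executing Bogolyubov's lemma in $\tilde{O}_\DC(n)$ rather than $\poly(|A|)$ time. A full FFT on the dense model costs $\tilde{O}(N) = \tilde{O}_\DC(|A|)$, which is borderline near-linear, and one must identify $\Gamma$ without materialising all $N$ Fourier coefficients; a heavy-hitters procedure tailored to the model is the right tool, and it is where I expect most of the technical work in the Randolph–Węgrzycki construction to sit. The LLL call is on a lattice of fixed dimension $d(\DC)$, so it costs $O_\DC(1)$ time in $n$. All remaining bookkeeping — Plünnecke–Ruzsa accounting, $s$-isomorphism verification via hashing, and the final pullback — reduces to hashing and sorting on sets of size $O_\DC(|A|)$, and therefore runs in $\tilde{O}_\DC(n)$ overall.
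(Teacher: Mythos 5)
The paper does not prove Theorem~\ref{thm:con_freiman}; it is cited verbatim from Randolph and W\k{e}grzycki~\cite{esa-paper}, and the present paper treats it as a black box. So there is no in-paper proof to line your attempt up against; I can only assess the proposal on its own terms and against the standard constructive route.

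Your plan (Pl\"unnecke--Ruzsa $\to$ random Freiman $s$-isomorphism into $\mathbb{Z}/N\mathbb{Z}$ $\to$ Bogolyubov $\to$ Bohr set $\to$ GAP via geometry of numbers $\to$ pullback) is indeed the modern Ruzsa-style proof that the cited constructive version is built on, so the high-level skeleton is the right one. Two points, one a gap and one a confusion. First, the gap: Bogolyubov and the geometry-of-numbers step give you a GAP sitting \emph{inside} $2A'-2A'$, not one \emph{containing} $A'$. To turn that into a GAP $G\supseteq A$ you still need a covering step --- the Ruzsa covering lemma (or Chang's refinement): since the GAP is large relative to $A'$ and lives in a bounded sumset, $A'$ can be covered by $O_{\DC}(1)$ translates of it, and those translates can be absorbed into $O_{\DC}(1)$ extra GAP dimensions. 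Your sketch says ``pulling back \dots yields a GAP that contains the original $A$,'' but that containment is exactly what is missing and is a genuine piece of the argument, not bookkeeping. Second, the confusion: you flag locating $\Gamma$ via heavy hitters as ``the main obstacle'' because a full FFT on the dense model would be ``borderline,'' but by your own Pl\"unnecke--Ruzsa accounting $N = O_{\DC}(|A|)$, so a dense FFT over $\mathbb{Z}/N\mathbb{Z}$ already runs in $\tilde O_{\DC}(n)$ and no sparse-Fourier machinery is needed there. The algorithmically delicate parts are elsewhere: sampling and verifying the Freiman $s$-isomorphism on a set of arbitrarily large integers, and making the covering step constructive; those are where the $1-n^{-\gamma}$ probability and the $\tilde O_{\DC}(n)$ bound actually get paid for.
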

The above theorem gives us a GAP of dimension $d(\DC) = 2^{\DC^{O(1)}}$ and volume $v(\DC) = 2^{2^{\DC^{O(1)}}} n$, as in the original Freiman's theorem~\cite{freiman1}. Noteworthy, later works do optimize $d$ and $v$~\cite{freiman-better-1, freiman-better-2, freiman-better-3, freiman-better-4}, yet these improvements are not captured by the constructive version.

This completes the necessary preliminaries on additive combinatorics.
We are now ready to show that, in the $\OC$-sense, \NP-hard weighted problems reduce to their unweighted versions whenever their input weights form a set of small doubling.

\section{Meta-Algorithm}\label{sec:meta-algo}

Before outlining the meta-algorithm, we first define the property that the considered weighted problems have to satisfy to be amenable to the recent speedups achieved for their unweighted counterparts. As motivated in the introduction, the key requirement is that the objective value has to be an additive combination of input weights. This additive structure ensures that objective values lie within a folded sumset of the input weights.

\begin{property}[Property $\phi$]
\label{prop:phi}
Let $\mathbf{P}_w$ be a weighted problem and $\mathbf{P}$ its unweighted counterpart. Let $I$ be an instance of $\mathbf{P}_w$ of size $n$, $w(I)$ its set of weights, and $W = \max w(I)$. Then, we say that $\mathbf{P}_w$ has property $\phi$ if the following hold:
\begin{enumerate}
    % Objective value characterization.
    \item Any feasible solution to $I$ is the total weight of a set $S \subseteq w(I)$,
    \label{cond:solution}

    % Algebraic algorithm.
    \item There is an algebraic algorithm $\mathcal{A}$ that solves $\mathbf{P}$ in $O(T(n))$-time and $\mathbf{P}_w$ in $O^*(T(n) \cdot W)$-time, and any intermediate solution to $I$ produced by $\mathcal{A}$ is the total weight of a polynomial-size multi-set with support in $w(I)$. \label{cond:algo}
\end{enumerate}
\end{property}

Notably, the additional requirement in Condition~\ref{cond:algo} is inherent to any bounded-input algorithm whose running time has the near-linear dependence on the largest input weight $W$. Next, we introduce the pairing function that maps GAP coordinates to integers, allowing the bounded-input algorithm to operate directly on integers rather than raw coordinate tuples.

\subsection{Pairing Function}\label{sec:kappa}

To be able to actually run the bounded-input algorithm $\mathcal{A}$, we need to transform GAP coordinates into integers. In particular, the meta-algorithm will first enlarge the dimension bounds before encoding them as integers, so that the coordinates set is \emph{well-defined}, namely:~Given any two GAP coordinate tuples $\alpha = \tup{\alpha_1, \ldots, \alpha_d}$ and $\beta = \tup{\beta_1, \ldots, \beta_d}$ of dimension $d$, it holds that $\alpha_i + \beta_i \leq L_i$, $\forall i \in [d]$.

In addition, as we operate over the $(\min, +)$ and $(\max, +)$ semirings, the pairing function $\kappa$ has to preserve the ``additivity'' of the coordinates, namely:\footnote{Note that we skip redundant information from $\kappa$, such as the dimension and the dimension bounds.}
\begin{equation}
    \kappa(\alpha \oplus \beta) = \kappa(\alpha) + \kappa(\beta),
    \label{eq:kapp_prop}
\end{equation}
where ``$\oplus$'' is the entrywise addition operator, i.e., $\tup{\alpha \oplus \beta}_i \vcentcolon= \alpha_i + \beta_i, \forall i \in [d]$. Since $\kappa$ must be injective, so that we can uniquely get back the GAP coordinates, $\kappa$ has to be a monomorphism between a well-defined set of GAP coordinates and the encoded integers.

\paragraph*{Our Pairing Function.} We propose the following pairing function:
\begin{equation*}
    \kappa\left(d, \tup{L_i}_{i \in [d]}, \tup{l_i}_{i \in [d]}\right)=
    \begin{cases}
        l_1, & \text{if } d = 1, \\
        l_d + (L_d + 1) \cdot \kappa\left(d - 1, \tup{L_i}_{i \in [d-1]}, \tup{l_i}_{i \in [d-1]}\right)\!, & \text{if } d > 1.
    \end{cases}
\end{equation*}
It receives the dimension of the GAP, $d$, its dimension bounds $\tup{L_i}_{i \in [d]}$, and the GAP coordinates $\tup{l_i}_{i \in [d]}$ of an element, and returns the encoded value.
If there is only one dimension, then the corresponding GAP coordinate is immediately returned.
Otherwise, it builds the encoded value recursively, by multiplying the result for the remaining $d - 1$ dimensions by $L_d + 1$ and finally adding $l_d$.
We show that in our setting it holds that $\kappa$ is a monomorphism:

\begin{restatable}[]{lemma}{PAIRING}
    \label{lemma:pairing}
    Let $\tup{L_i}_{i \in [d]} \in \mathbb{N}^d$ be the bounds of a GAP of dimension $d$. Given a well-defined set $D \subseteq \prod_{i=1}^{d} [0, L_i]$ of $d$-size coordinate tuples, the pairing function $\kappa$ is a monomorphism between $(D, \oplus)$ and $([0, \prod_{i=1}^{d}(L_i + 1) - 1], +)$.
\end{restatable}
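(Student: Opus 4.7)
The plan is to recognize $\kappa$ as a standard mixed-radix encoding and verify in turn that it lands in the stated range, respects the entrywise-sum operation, and is injective. The well-definedness hypothesis on $D$ will be exactly what prevents ``carries'' in the encoding.

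First, I would unfold the recursion by induction on $d$ to obtain the closed form
\[
\kappa\bigl(d, \tup{L_i}_{i\in[d]}, \tup{l_i}_{i\in[d]}\bigr) \;=\; \sum_{i=1}^{d} l_i \prod_{j=i+1}^{d}(L_j+1),
\]
with the convention that the empty product equals $1$. From this expression the range claim is immediate: since $0 \le l_i \le L_i$ for each $i$, the maximum is attained at $l_i = L_i$, and a telescoping identity gives $\sum_{i=1}^{d} L_i \prod_{j>i}(L_j+1) = \prod_{i=1}^{d}(L_i+1) - 1$, while the minimum is clearly $0$.

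Second, the homomorphism property follows from the linearity of the closed form in $\tup{l_i}$. For any $\alpha, \beta \in D$ whose entrywise sum $\alpha \oplus \beta$ again lies in $D$, i.e., satisfies $\alpha_i + \beta_i \le L_i$ for all $i$, we have
\[
\kappa(\alpha \oplus \beta) \;=\; \sum_{i=1}^{d}(\alpha_i + \beta_i)\prod_{j=i+1}^{d}(L_j+1) \;=\; \kappa(\alpha) + \kappa(\beta),
\]
which is exactly Eq.~\eqref{eq:kapp_prop}. Note that it is precisely the well-definedness of $D$ that guarantees each coordinate of the sum still lies in $[0, L_i]$; without this, the digit $\alpha_i + \beta_i$ could exceed $L_i$ and the identity would fail through a carry.

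Third, for injectivity I would recover the coordinates from $\kappa$ by induction on $d$. Writing $N = \kappa(d, \tup{L_i}, \tup{l_i})$, all terms in the closed form other than $l_d$ are divisible by $L_d+1$, so $l_d \equiv N \pmod{L_d+1}$, and since $0 \le l_d \le L_d$ this determines $l_d$ uniquely; then $(N - l_d)/(L_d+1)$ is the value of $\kappa$ on the dimension-$(d-1)$ prefix, to which the inductive hypothesis applies. Combining the three properties yields the monomorphism from $(D, \oplus)$ into $([0, \prod_{i=1}^{d}(L_i+1)-1], +)$ asserted by the lemma. I do not anticipate any real obstacle; the only point requiring care is the explicit invocation of the well-definedness assumption in the homomorphism step.
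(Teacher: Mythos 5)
Your proof is correct but takes a genuinely different route from the paper. You first establish the closed form $\kappa(d,\tup{L_i},\tup{l_i}) = \sum_{i=1}^{d} l_i \prod_{j>i}(L_j+1)$, recognize it as a mixed-radix (positional) encoding, then get the range bound from a telescoping sum, the homomorphism from linearity of the closed form (with the well-definedness of $D$ preventing digit overflow), and injectivity by recovering digits via $\bmod/\mathrm{div}$ from the low dimension outward. The paper instead never unfolds the recursion: it proves the homomorphism identity directly by induction on $d$, manipulating $\kappa_d(\alpha\oplus\beta) = (\alpha\oplus\beta)_d + (L_d+1)\kappa_{d-1}(\alpha\oplus\beta)$, and proves injectivity by establishing the stronger monotonicity property $\alpha\prec\beta \Rightarrow \kappa_d(\alpha)<\kappa_d(\beta)$ (lexicographic order on tuples is preserved), again by induction with a two-case analysis on whether the $(d-1)$-prefixes agree. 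Your argument is shorter and more standard once the mixed-radix interpretation is spotted; the paper's argument is more laborious but yields the monotonicity statement as an explicit byproduct, which the authors reference in Section~3.1 (Eq.~\eqref{eq:nice_monotonicity}) as part of the discussion of why $\kappa$ alone does not preserve the total order on weight values. Your digit-recovery argument actually proves that $\kappa$ is a bijection onto the full range $[0,\prod(L_i+1)-1]$ when $D$ is the full box, which is slightly stronger than the injectivity the paper states; this is fine but not needed. One stylistic nit: you phrase the homomorphism step as ``for any $\alpha,\beta\in D$ whose entrywise sum again lies in $D$,'' but the paper's notion of a well-defined $D$ already guarantees $\alpha_i+\beta_i\le L_i$ for \emph{all} pairs, so the hypothesis should be stated unconditionally rather than conditionally.
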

We defer the proof of Lemma~\ref{lemma:pairing} to Appendix~\ref{appendix:mono}. As a by-product of our injectivity proof, we show that $\kappa$ satisfies a certain monotonicity property:
\begin{equation}
    \label{eq:nice_monotonicity}
    \alpha\prec\beta \implies \kappa(\alpha) < \kappa(\beta),
\end{equation}
where $\alpha \prec \beta$ stands for the (syntactic) lexicographical order of the coordinate tuples, e.g., $\tup{1, 2} \prec \tup{1, 3}, \tup{1, 2} \prec \tup{2, 1}$. However, even then, there is still an issue:~\emph{monomorphism alone does not suffice}. We discuss why this is the case in the following.

\paragraph*{Why Monomorphism Alone Is Not Sufficient.} The present form of $\kappa$ does not suffice to show the correctness of our meta-algorithm. The salient issue is that $\kappa$ does \emph{not} preserve the total order of the tuples regarding their original, to-be-represented value.
Consider for instance the coordinate tuples $\tup{1, 2}$ and $\tup{2, 1}$.
While we know that $\tup{1, 2} \prec \tup{2, 1}$ in the lexicographical sense and implicitly $\kappa(\tup{1,2}) < \kappa(\tup{2,1})$, according to Eq.~\eqref{eq:nice_monotonicity}, it could be that the generators of the GAP are $\tup{3, 10}$.
In that case, the tuples should be ordered as $\tup{2, 1} \sqsubset \tup{1, 2}$, since $3 \cdot \emph{2} + 10 \cdot \emph{1} < 3 \cdot \emph{1} + 10 \cdot \emph{2}$, where we refer to ``$\sqsubset$'' as the total order on the GAP coordinates.

\paragraph*{Restoring the Order.} To address this, we augment the call to the bounded-input algorithm with a (fixed) \emph{permutation} $\pi$ that restores the order of the original values. Whenever the algorithm has to find the minimum~/~maximum monomial in the solution polynomial (Def.~\ref{def:sol-poly})---the final last step in the bounded-input algorithms we consider---, it uses $\pi$ to recover the true rank of the current encoding $e$.

\paragraph*{Building $\pi$.} The permutation can be built in polynomial-time w.r.t.~$\OC$, as follows: Once the dimension bounds of the GAP have been enlarged (see line~\ref{line:3} in the upcoming Alg.~\ref{algo:meta}), we can generate the values $\sum_{i=1}^{d} x_i \cdot l_i$ for $l_i \in [0, \lam L_i], \forall i \in [d]$, where $\lam$ is problem-specific (see the upcoming Sec.~\ref{sec:exploiting}). By construction, these are all possible values that can be taken during a bounded-input algorithm's run. To actually ``freeze'' the permutation $\pi$, we have to sort the generated values, which takes time $\widetilde O(\lam ^d \prod_{i=1}^{d} (L_i + 1))$. We call this step $\textsc{BuildPermutation}$ in the meta-algorithm.

\paragraph*{Decoding.} The decoding process follows a similar line and is outlined in the following:
\begin{equation*}
    \kappa^{-1}\left(d, \tup{L_i}_{i \in [d]}, e\right)=
    \begin{cases}
        \tup{e}, & \text{if } d = 1, \\
        \kappa^{-1}\left(d - 1, \tup{L_i}_{i \in [d-1]}, \left\lfloor\frac{e}{L_{d}+1}\right\rfloor\right) \circ \tup{e\bmod{(L_d + 1)}}, & \text{if } d > 1.
    \end{cases}
\end{equation*}
Namely, if there is only one dimension, the current encoding is returned, otherwise we append the result of $e \bmod{(L_d + 1)}$ to the tuple resulting from the remaining $d - 1$ dimensions. The recursive call is done with an $e$ from which the contribution of dimension $d$ has been removed.

\subsection{Exploiting the Bounded-Input Algorithm}\label{sec:exploiting}

Let $\mathbf{P}_w$ be a weighted problem satisfying property~$\phi$, and let $I$ be an instance of $\mathbf{P}_w$ of size~$n$ whose set of weights has small doubling. Let $\mathcal{A}$ be the bounded-input algorithm from Condition~\ref{cond:algo}. Furthermore, let $\lambda = \poly(|w(I)|)$ be an upper-bound on the cardinality of the multi-sets from the same condition; as we will see, for all bounded-algorithms we consider in this paper it holds that $\lambda = O(|w(I)|)$. We can now state our meta-algorithm, outlined in Alg.~\ref{algo:meta}.

\begin{restatable}[]{algorithm}{META_ALGO}
    \caption{$\mathcal{M}(I, \mathcal{A})$}
	\label{algo:meta}
\begin{algorithmic}[1]
    \State $d(\DC), \tup{x_1, \ldots, x_{d(\DC)}}, \tup{L_1, \ldots, L_{d(\DC)}} \gets \textsc{ConstructiveFreiman}(w(I))$ [Thm.~\ref{thm:con_freiman}] \label{line:1}
    \State $w_{\text{coord}} \gets \textsc{GetGapCoordinates}\left(
        w(I),
        \tup{x_1, \ldots, x_{d(\DC)}}, \tup{L_1, \ldots, L_{d(\DC)}}
    \right)$ \label{line:2}
    \State $w' \gets \kappa\left(d(\DC), \tup{\lam L_1, \ldots, \lam L_{d(\DC)}},  w_{\text{coord}}\right)$ \label{line:3}
    \State $\pi \gets \textsc{BuildPermutation}\left(
        d(\DC),
        \tup{x_1, \ldots, x_{d(\DC)}},
        \tup{\lam L_1, \ldots, \lam L_{d(\DC)}},
        w_{\text{coord}}
    \right)$ \label{line:pi}
    \State $c' \gets \mathcal{A}(w', \pi)$ \label{line:4}
    \State $\tup{l_1', \ldots, l_{d(\DC)}'} \gets \kappa^{-1}\left(d(\DC), \tup{\lam L_1, \ldots, \lam L_{d(\DC)}}, c'\right)$ \label{line:5}
    \State \Return $\sum_{i=1}^{d(\DC)} x_i l_i'\:(=\vcentcolon c^*)$ \label{line:6}
\end{algorithmic}
\end{restatable}

\METATHM*
\begin{proof}
    We prove that the bounded-input algorithm actually receives as input polynomially bounded weights w.r.t.~$\OC$ via the new weight function $w'$ (line~\ref{line:4}).
    This will guarantee that the line runs in time $\OC^*(T(n))$. Note that this does not hide pseudo-polynomial factors in the largest input weight.
    
    To this end, let $G'$ be the GAP which $\lam w(I)$, a superset of any intermediate solution---recall Condition~\ref{cond:algo}, is a subset of.
    Moreover, the initial set of input weights $w(I)$ is also a subset of a GAP $G$. Let us see the connection between $G'$ and the original $G$.
    First, the parameters of $G$ are $\{x_1, \ldots, x_{d(\DC)}\}$ and $\{L_1, \ldots, L_{d(\DC)}\}$, i.e., \[
        G = \{x_1 l_1 + \ldots + x_{d(\DC)} l_{d(\DC)} \mid \forall i, l_i \in [L_i]\}.
    \]
    Then, since $G'$ has to capture the weights in $\lam w(I)$, we need to modify the bounds of the original GAP.
    We thus obtain the following new GAP:\[
        G' = \{x_1 l_1 + \ldots + x_{d(\DC)} l_{d(\DC)} \mid \forall i, l_i \in [\lam L_i]\}.
    \]
    Recall now that the coordinates are mapped to the integer range $[0, \prod_{i=1}^{d(\DC)} (\lambda L_i + 1) - 1]$, which is exactly the volume of $G'$. Bounding this, we obtain:
    \begin{align}
    |G'| = \prod_{i=1}^{d(\DC)} (\lambda L_i + 1)
        &\leq \lambda^{d(\DC)} \prod_{i=1}^{d(\DC)} (L_i + 1) \nonumber \\
        &= \lambda^{d(\DC)} |G| \nonumber \\
        &= \lambda^{d(\DC)} v(\DC)\,|w(I)| \nonumber \\
        &= |w(I)|^{\OC(1)}. \label{eq:ineq}
    \end{align}
    Therefore, the bounded-input algorithm (line~\ref{line:4}) is run on an instance with polynomially large weights w.r.t.~$\OC$, represented by the weight function $w'$.
    
    Let us analyze the total running time.
    We argue that all steps apart from line~\ref{line:4} take polynomial time w.r.t.~$\OC$ each.
    Namely, once the parameters of the GAP $G$ have been computed (line~\ref{line:1}),
    we can compute the coordinates corresponding to the values of the weight function $w$ by a naive iteration of $G$ (line~\ref{line:2}); this takes time $O(|w(I)| \cdot |G|) = O(|w(I)| \cdot v(\DC) |w(I)|) = \OC(|w(I)|^2)$. Then, building the permutation $\pi$ (line~\ref{line:pi}) takes time $\widetilde O(|G'|) = \widetilde O(|w(I)|^{O_\DC(1)})$; see Sec.~\ref{sec:kappa} and the above Eq.~\eqref{eq:ineq}.
    Finally, applying $\kappa$ on the GAP coordinates $w_{\text{coord}}$ takes $\OC(|w(I)|)$-time (line \ref{line:3}), while lines \ref{line:5} and \ref{line:6} take $\OC(1)$-time and $\widetilde{O}_{\DC}(1)$-time, respectively.

    Noteworthy, when not parameterizing on the doubling constant, the running time of the meta-algorithm reads $O^*(\lambda^{d(\DC)} v(\DC) T(n))$, which is $\OC^*(T(n))$.
\end{proof}

\section{Applications}\label{sec:apps}

In the following, we review several weighted \NP-hard problems and show how to solve their small-doubling instances faster than the traditional algorithm by leveraging the previous meta-algorithm. This enables running times in the regime of the unweighted version of the problem for a large suite of problems, such as \textsc{TSP}, \textsc{Weighted Max-Cut}, \textsc{Edge-Weighted $k$-Clique}, and \textsc{Minimum Steiner Tree}. We stress that, compared to prior work~\cite{esa-paper}, we consider \emph{weighted} problems that operate over tropical semirings.

\subsection{TSP Meets Small Doubling}\label{subsec:tsp-dc}

We define the doubling constant parametrization of TSP, $\DC$-TSP, as follows:

\defproblem{$\mathbf{\mathcal{C}}$-\textbf{TSP}}
{Undirected graph $G = (V, E)$, a cost $w(e) \in \mathbb{Z}_{\geq 0}$ for each $e \in E = E(G)$, such that $|w(E) + w(E)| \leq \DC |w(E)|$}
{Tour $T \subseteq E$ that minimizes $\displaystyle\sum_{e\in T} w(e)$}

\paragraph*{Why the Embedding Technique Fails.} First, note that bounded-input algorithm does not help much for $\DC$-\textsc{TSP}: The largest edge weight $W$ can still be exponentially large. Therefore, the resulting running time would be even worse than that of the traditional algorithm. We show an alternative solution via our meta-algorithm (Alg.~\ref{algo:meta}).

\paragraph*{Intuition.} Let $A \vcentcolon= w(E)$ denote the set of edge weights. Recall that the weight of any tour in TSP is computed by the total sum of the edge weights on that tour. To understand how this actually relates to the doubling constant, consider the case when we only consider \emph{paths} of length 2. The possible weights of these paths are all included in $A + A$. Note that not all path weights are also valid: some weights might come from paths using the same edge twice. This, however, is not an issue. The main benefit we get is that the cardinality of $A + A$ is at most $\DC |A|$. If we continue with longer paths, we will observe the same situation: Namely, the weights of the paths of length $k$ will be included in $k A$, the $k$-fold sumset of $A$. In particular, the shortest tour length will be an element of $nA$. 

\begin{lemma}
    TSP satisfies property $\phi$.
\end{lemma}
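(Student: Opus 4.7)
The plan is to verify the two conditions of Property~\ref{prop:phi} separately.

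For Condition~\ref{cond:solution}, I would simply observe that any TSP tour $T$ uses exactly $n$ edges, so its objective value is $\sum_{e\in T} w(e)$, i.e.\ the total weight of a size-$n$ multi-set drawn (with repetition) from $w(I)=w(E)$. This is consistent with the intuition already given in Section~\ref{subsec:tsp-dc}: path weights of length $k$ lie in $k\cdot w(I)$, and in particular the optimum tour length is an element of $n\cdot w(I)$.

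For Condition~\ref{cond:algo}, the plan is to couple any algorithm for \textsc{Hamiltonicity} (e.g.\ Bj\"orklund's $O^*(1.66^n)$-time algorithm~\cite{hamiltonicity}) with the polynomial embedding recalled in Section~\ref{subsec:embedding}: replace every edge weight $w(e)$ by the monomial $x^{w(e)}$ and run the same computation over $\mathbb{Z}[x]$. The output of this lifted algorithm $\mathcal{A}$ is exactly the solution polynomial $F_{I_w}(x)$ of Definition~\ref{def:sol-poly}, so $\mathcal{A}$ is algebraic. Since every arithmetic operation now manipulates polynomials of degree at most $nW$, the $O(T(n))$-time unweighted computation becomes $O^*(T(n)\cdot W)$-time for TSP, as required.

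The most delicate step is verifying the intermediate-solution clause of Condition~\ref{cond:algo}, namely that every intermediate monomial produced by $\mathcal{A}$ encodes a polynomially bounded multi-set of edge weights. To establish this, I would argue that each monomial $x^s$ arising during the computation is obtained as a product of at most $O(n)$ per-edge monomials $x^{w(e)}$, reflecting that the combinatorial sub-objects manipulated by a Hamiltonicity subroutine (partial walks, cycle covers, matchings) always involve $O(n)$ edges. Hence every exponent $s$ is the total weight of a size-$O(n)$ multi-set with support in $w(I)$, which in turn pins down the parameter $\lambda=O(|w(I)|)$ used by the meta-algorithm in Section~\ref{sec:exploiting} and completes the verification of Property~\ref{prop:phi}.
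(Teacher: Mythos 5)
Your proof follows the same route as the paper: Condition~\ref{cond:solution} is immediate because a tour's weight is the sum of its edges' weights, and Condition~\ref{cond:algo} is discharged by instantiating $\mathcal{A}$ with Bj\"orklund's Hamiltonicity algorithm~\cite{hamiltonicity}, whose weighted variant runs in $O^*(1.66^n W)$ via the polynomial embedding and whose intermediate objects are cycle covers on $n$ arcs, giving $\lambda\le n$. One small slip at the end: from a size-$O(n)$ multi-set you should conclude $\lambda=O(n)$ rather than $\lambda=O(|w(I)|)$; the two coincide only if $n=O(|w(I)|)$, which need not hold (the weight set can be small even for large $n$), but this does not affect the substance since the meta-theorem only needs $\lambda=\poly(n)$.
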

\begin{proof}
    Condition~\ref{cond:solution} holds since a tour’s weight is the sum of the weights of its constituent edges. Next, we can take $\mathcal{A}$ as Bj\"orklund's algorithm~\cite{hamiltonicity} for undirected \textsc{Hamiltonicity}, which runs in time $O^*(1.66^n)$, and solves the weighted case in time $O^*(1.66^n\,W)$~\cite[Thm.~3]{hamiltonicity}. Due to the algorithm's design, namely that it counts cycle covers of $n$ arcs, we have $\lambda \leq n$.
\end{proof}

\DCTSP*

\subsection{Weighted Max-Cut Under Small Doubling}\label{sec:maxcut}

Another well-studied problem whose unweighted version has been sped up and fits the small doubling setting well is the \textsc{Weighted Max-Cut} problem.
In a breakthrough result, Williams~\cite{maxcut-williams} improved the time complexity of the unweighted \textsc{Max-Cut} to $O^*(2^{\omega n / 3})$.
We show that we can leverage this result to the weighted setting when the edge weights form an integer set that has small doubling.
We formally define the doubling constant parametrization of the problem in the following:

\defproblem{$\mathbf{\mathcal{C}}$-\textbf{Weighted Max Cut}}
{Undirected graph $G = (V, E)$, a weight $w(e) \in \mathbb{Z}_{\geq 0}$ for each $e \in E = E(G)$ such that $|w(E) + w(E)| \leq \DC |w(E)|$} 
{Subset of vertices $S$ so that $\sum_{e\in\delta(S)} w(e)$ is maximized}

\begin{lemma}
    \textsc{Weighted Max-Cut} satisfies property $\phi$.
\end{lemma}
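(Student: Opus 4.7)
The plan is to mirror the TSP lemma almost verbatim, since the two ingredients of property $\phi$ decouple cleanly: the combinatorial structure of a cut, and the availability of an algebraic bounded-input algorithm with a polynomially bounded intermediate-solution support.

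For Condition~\ref{cond:solution}, I would observe that a feasible solution in \textsc{Weighted Max-Cut} is a vertex bipartition $S$, and its objective value is $\sum_{e \in \delta(S)} w(e)$, where $\delta(S) \subseteq E$ is the set of edges crossing the cut. Hence the objective value is the total weight of a subset of $w(E)$ (treating $w(E)$ as the multi-set of edge weights, or equivalently as a set, since Condition~\ref{cond:solution} only requires that any feasible objective be realizable as a sum over a subset of the input weights). This is the direct analogue of the ``a tour's weight is the sum of its edge weights'' argument used for TSP.

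For Condition~\ref{cond:algo}, I would invoke Williams' algorithm~\cite{maxcut-williams} (with the correction by Koivisto~\cite{koivisto-correct}) for unweighted \textsc{Max-Cut}, which runs in $O^*(2^{\omega n/3})$-time. Its algebraic template proceeds by encoding each edge weight as a monomial $x^{w(e)}$ and computing, in the $(+,\times)$ ring, a solution polynomial whose monomials enumerate attainable cut values; extracting the largest-exponent monomial yields the optimum. Over the bounded-input regime this incurs an $\widetilde O(W)$ overhead, so $\mathbf{P}_w$ is solved in $O^*(2^{\omega n/3}\,W)$-time, which is precisely the $O^*(T(n)\cdot W)$ shape required by Condition~\ref{cond:algo}. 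For the multi-set bound, I would note that every intermediate polynomial manipulated by the algorithm represents cut contributions accumulated over a subset of edges, so each attainable exponent is the sum of at most $m = |E|$ input weights; hence $\lambda \leq m = \poly(|w(I)|)$, as required.

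The only place I expect any friction is checking the ``any intermediate solution is the total weight of a polynomial-size multi-set with support in $w(I)$'' clause for Williams' algorithm, because that algorithm groups vertices into triples and multiplies matrices whose entries are polynomials in $x$; one has to argue that every monomial appearing in those matrices corresponds to a sum of at most $O(m)$ edge weights rather than some pseudo-polynomial sum. This is immediate from the semantics of the construction (each monomial encodes an actual partial cut contribution, which is supported on a subset of the edge set), but it is the one spot where a line of justification, rather than a direct quote from Bj\"orklund's theorem as in the TSP case, is needed. Once this is in place, the lemma follows, and Corollary~\ref{thm:dc_maxcut} is obtained by feeding Williams' algorithm into Theorem~\ref{thm:meta}.
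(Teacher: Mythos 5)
Your proposal is correct and follows the same route as the paper: invoke the structure of a cut for Condition~\ref{cond:solution}, and Williams's algebraic algorithm (as corrected by Koivisto) with its $O^*(2^{\omega n/3}\,W)$ bounded-input variant for Condition~\ref{cond:algo}. Your added discussion of the multi-set bound $\lambda \leq m$, justified by the fact that each monomial encodes a partial cut contribution supported on a subset of edges, is a helpful spelling-out of a point the paper leaves implicit for \textsc{Max-Cut} (while making it explicit for TSP and \textsc{Edge-Weighted $k$-Clique}).
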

\begin{proof}
    Condition~\ref{cond:solution} holds since a cut’s weight is the sum of the weights of its constituent edges. Next, we can take $\mathcal{A}$ as Williams's algebraic algorithm~\cite{maxcut-williams} (and later improved by Koivisto~\cite{koivisto-correct}) for the unweighted \textsc{Max-Cut} problem, which runs in time $O^*(2^{\omega n / 3})$, and solves the weighted case in time $O^*(2^{\omega n / 3}\,W)$~\cite[Thm.~3]{koivisto-correct}.
\end{proof}

Using the above lemma together with Thm.~\ref{thm:meta}, we obtain:

\DCMAXCUT*

\subsection{Small Doubling and Edge-Weighted $k$-Cliques}

We now turn to a problem whose complexity is a rather popular hypothesis in hardness proofs~\cite{clique-hyp-1, clique-hyp-2, clique-hyp-3}: the edge-weighted $k$-clique problem. Compared to the problems considered so far, the algebraic algorithm for the unweighted $k$-Clique has already been known from the 80's~\cite{uw-k-clique}. We show that it can be leveraged in the doubling constant parametrization of the edge-weighted version. Note that, apart from a mention in passing in Lincoln et al.~\cite{clique-hyp-3} that Nešetřil-Poljak's algorithm can also be used for polynomially bounded weights, we could not find any reference to a proof. For completeness, we show this in Lemma~\ref{lemma:np} (Appendix~\ref{appendix:NP}).

The doubling constant parametrization of the problem is the following:

\defproblem{$\mathbf{\mathcal{C}}$-\textbf{Edge-Weighted $k$-Clique}}
{Undirected graph $G = (V, E)$, integer $k$, and a weight $w(e) \in \mathbb{Z}_{\geq 0}$ for each $e \in E = E(G)$ such that $|w(E) + w(E)| \leq \DC\, |w(E)|$}
{A vertex set $S \subseteq V$ with $|S| = k$ that induces a clique and maximizes $\sum_{e \in E(G[S])} w(e)$}

\begin{lemma}
    \textsc{Edge-Weighted $k$-Clique} satisfies property $\phi$.
\end{lemma}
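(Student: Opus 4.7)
The plan is to verify the two conditions of Property~\ref{prop:phi} for \textsc{Edge-Weighted $k$-Clique}, mirroring the template used in the lemmas for TSP and \textsc{Weighted Max-Cut}. Condition~\ref{cond:solution} is essentially immediate: every feasible solution is a $k$-clique on a vertex set $S$, and its objective value $\sum_{e \in E(G[S])} w(e)$ is, by definition, the total weight of a subset (of size exactly $\binom{k}{2}$) of $w(E)$.

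For Condition~\ref{cond:algo}, I would take $\mathcal{A}$ to be Ne\v{s}et\v{r}il--Poljak's algebraic algorithm for $k$-\textsc{Clique}~\cite{uw-k-clique}, which runs in time $O(n^{\omega k / 3})$ (absorbing the divisibility subtleties for $k \bmod 3 \neq 0$ into the generic $T(n, m, k)$ term), and invoke Lemma~\ref{lemma:np} in Appendix~\ref{appendix:NP} to assert that its polynomially-bounded weighted variant runs in time $O^*(T(n, m, k) \cdot W)$. That lemma is proved via the standard polynomial-embedding technique of Sec.~\ref{subsec:embedding}: adjacency entries are replaced by monomials $x^{w(e)}$, matrix products are performed in $\mathbb{Z}[x]$ truncated at degree $\binom{k}{2} W$, and each scalar multiplication therefore incurs only a $\widetilde O(W)$ overhead, yielding the desired pseudo-polynomial dependence on the largest edge weight.

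Finally, I would bound the parameter $\lambda$. In the embedded Ne\v{s}et\v{r}il--Poljak computation, every monomial $x^{s}$ appearing in an intermediate matrix entry corresponds to the total weight of the edges of some $j$-subclique of $G$ with $j \leq k$; hence $s$ is the sum of at most $\binom{k}{2}$ elements of $w(E)$. Thus we may take $\lambda = \binom{k}{2} = O(k^2)$, which is polynomial in the input size, completing the verification. The only part requiring any genuine care is confirming that each monomial exponent produced during the sequence of matrix products truly corresponds to a sub-clique (rather than to some wilder subgraph arising from repeated uses of an edge); this follows directly from the semantics of the $x^{w(e)}$-encoded adjacency products and is exactly what Lemma~\ref{lemma:np} formalizes, so the verification reduces to a direct instantiation of the template already used for TSP and \textsc{Weighted Max-Cut}.
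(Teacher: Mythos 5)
Your proposal follows essentially the same route as the paper: verify Condition~\ref{cond:solution} directly, then instantiate Condition~\ref{cond:algo} with Ne\v{s}et\v{r}il--Poljak's algebraic $k$-\textsc{Clique} algorithm together with Lemma~\ref{lemma:np}. The one point where you diverge is the bound on $\lambda$. You argue that every intermediate monomial exponent is the weight of a $j$-subclique with $j \le k$, hence $\lambda = \binom{k}{2}$. This overlooks how Lemma~\ref{lemma:np} actually constructs the bounded-input algorithm: it builds an auxiliary graph $H$ whose vertices are $(k/3)$-cliques, and to avoid having both vertex and edge weights it \emph{doubles} the cross-weights, so that an edge weight $w'(Y_1,Y_2)$ in $H$ is a multi-set in which each original edge weight appears with multiplicity up to two. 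Consequently a triangle sum in $H$ equals $2W_{\mathrm{clique}}$, a multi-set of cardinality roughly $2\binom{k}{2}$, not $\binom{k}{2}$; the paper states this more coarsely as $\lambda \le 2|w(I)|$. Since all that the meta-theorem needs is $\lambda = \poly(|w(I)|)$, this factor-of-two slip does not affect the asymptotic conclusion, but your description of the embedded computation (replacing adjacency entries of $G$ by $x^{w(e)}$) also glosses over the fact that the polynomial embedding is applied to $H$'s adjacency matrix, not $G$'s, which is precisely what makes the intermediate exponents the doubled subclique weights rather than raw subclique weights.
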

\begin{proof}
    Condition~\ref{cond:solution} holds since a clique’s weight is the sum of the weights of its constituent edges. Next, we can take $\mathcal{A}$ as Nešetřil-Poljak's algebraic algorithm~\cite{uw-k-clique} for the unweighted version of the problem (or the improvement by Eisenbrand and Grandoni~\cite{faster-uw-k-clique} for some values of $k$ using rectangular matrix multiplication), which runs in time $O^*(n^{k \omega / 3})$, and solves the weighted case in time $O^*(n^{k \omega / 3}\,W)$ (Lemma~\ref{lemma:np}). Notably, due to its design, namely that each original input weight is \emph{doubled} in the auxiliary graph, the bounded-input algorithm (Lemma~\ref{lemma:np}) has a $\lambda \leq 2 |w(I)|$.
\end{proof}

In combination with Thm.~\ref{thm:meta}, we obtain:

\DCEWCLIQUE*

\subsection{Steiner Meets Freiman}\label{sec:min-steiner-tree}

For reference, we also include the \textsc{Minimum Steiner Tree}, even though the general setting can be directly solved in $O((2 + \epsilon)^k\, p(n))$~\cite{faster_steiner1}, where $p(n)$ is a polynomial function of $n$, the degree of which grows rapidly as $\epsilon$ approaches zero, Its degree has been later refined to $12 \sqrt{\epsilon^{-1} \ln \epsilon^{-1}}$, having as a result running times such as $O(2.1^k\,n^{57.6})$ or $O(2.5^k\,n^{14.2})$~\cite{fuchs2007dynamic}. Notably, Dreyfus and Wagner~\cite{dreyfus1971steiner} were the first to design a dynamic programming recursion running in time $O(3^kn + 2^k n^2 + nm)$, where $n = |V|$, $m = |E|$, and $k = |T|$.

The doubling constant parametrization of the problem is the following:

\defproblem{$\mathbf{\mathcal{C}}$-\textbf{Minimum Steiner Tree}}
{Undirected graph $G = (V, E)$, a weight $w(e) \in \mathbb{Z}_{\geq 0}$ for each $e \in E = E(G)$, and a set of vertices $K \subseteq V = V(G)$} 
{Subgraph $H$ of $G$ that connects the vertices in $K$ and has the minimum total weight $\sum_{e\in E(H)} w(e)$ among all such subgraphs of $G$}

Next, we show that the minimum Steiner tree is amenable for speedups in this regime:
\begin{lemma}
    \textsc{Minimum Steiner Tree} satisfies property $\phi$.
\end{lemma}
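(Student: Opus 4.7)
The plan is to verify the two conditions of Property~\ref{prop:phi} for \textsc{Minimum Steiner Tree}. Condition~\ref{cond:solution} is immediate: any feasible subgraph $H$ connecting $K$ has objective value $\sum_{e\in E(H)} w(e)$, which is the total weight of a multi-set with support in $w(I)$; since any optimal Steiner tree uses at most $|V|-1$ edges, this multi-set is of polynomial size.

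For Condition~\ref{cond:algo}, I would take $\mathcal{A}$ to be the Dreyfus--Wagner dynamic program (or, for a better exponential dependence on $k$, the subset-convolution-based $O^*(2^k)$ variant of Björklund--Husfeldt--Kaski--Koivisto), reformulated as an algebraic algorithm in the style of Sec.~\ref{subsec:embedding}. Concretely, each edge weight $w(e)$ is replaced by the monomial $x^{w(e)}$, each objective-value ``$+$'' by polynomial multiplication, and each ``$\min$'' by polynomial addition, with all arithmetic carried out in $\mathbb{Z}[x]/(x^{(n-1)W+1})$. The unweighted algorithm already performs exactly this sequence of semiring operations and, under this translation, returns the solution polynomial of Def.~\ref{def:sol-poly}. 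Since every polynomial that arises has degree at most $(n-1)W$, each semiring step costs $\widetilde O(nW)$, yielding the required $O^*(T(n,m,k)\cdot W)$ bound on the weighted running time.

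To bound $\lambda$, I would observe that every non-zero monomial produced during the DP tracks a partial Steiner tree on some subset of $K$ plus one extra vertex, and such a tree uses at most $n-1$ edges; hence every intermediate objective value is the sum of at most $n-1$ weights drawn from $w(I)$, giving $\lambda \le n-1 = O(|w(I)|)$, as needed for the meta-theorem.

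The main obstacle I expect is verifying that the chosen unweighted algorithm genuinely admits the algebraic reformulation with the stated intermediate-solution guarantee. For the plain Dreyfus--Wagner recursion this is transparent from the DP structure. For the subset-convolution speedup the check is subtler: one must argue that the Möbius/zeta transforms act linearly on the polynomial coefficients and therefore preserve the ``$x^{w(e)}$-as-a-weight'' interpretation, and that the pointwise products performed in the transformed domain remain polynomial multiplications of degree at most $(n-1)W$. Once this verification is in place, the lemma follows, and Cor.~\ref{thm:dc_steiner} is obtained by plugging it into Thm.~\ref{thm:meta}.
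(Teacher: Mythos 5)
Your proof is correct and follows essentially the same route as the paper: Condition~\ref{cond:solution} via the fact that (as weights are non-negative) an optimal solution is a tree, and Condition~\ref{cond:algo} via the subset-convolution bounded-input algorithm of Bj\"orklund, Husfeldt, Kaski, and Koivisto. The paper is simply more concise, directly citing [fsc, Sec.~4.1.2] (and the Lokshtanov--Nederlof polynomial-space variant) for the $O^*(2^k n^2 W)$-time algebraic algorithm instead of re-deriving the polynomial embedding and the $\lambda \le n-1$ bound as you do.
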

\begin{proof}
    Condition~\ref{cond:solution} holds since, as the edge weights are all non-negative, the optimal subgraph $H$ is a tree with its leaves in $K$, and a tree’s weight is the sum of the weights of its constituent edges. Next, we can take $\mathcal{A}$ as the fast subset convolution based algorithm due to Bj\"orklund, Husfeldt, Kaski, and Koivisto~\cite[Sec.~4.1.2]{fsc}, which runs in $O^*(2^k n^2 W)$-time, or Lokshtanov--Nederlof's polynomial-space algorithm~\cite{saving_space} in the same running time.
\end{proof}

With this, using Thm.~\ref{thm:meta}, we obtain:

\DCSTEINER*

This concludes the suite of applications of our meta-algorithm.~We conclude with an outlook on the applicability of our framework beyond \NP-hard weighted problems.

\section{Outlook:~Beyond \NP-hard Problems}\label{sec:outlook}

As shown in this paper, the constructive Freiman's theorem can be converted in a rather effective application for \NP-hard weighted problems as well. A natural question arises: Is it possible to extend its applicability to weighted problems in \textsf{P}?

\paragraph*{Min-Plus Convolution.} The answer seems to be a positive one, in particular for those polynomial problems for which the solution is an additive combination of the input weights. A simple application is that of the $(\min, +)$-convolution:~Given sequences $(a[i])_{i=0}^{n - 1}$ and $(b[i])_{i=0}^{n - 1}$, the goal is to compute $(c[i])_{i=0}^{n - 1}$, where $c[k] = \min_{i=0, \ldots, k} \{a[i]\,+\,b[k - i]\}$. The naive algorithm runs in $O(n^2)$-time, and no $O(n^{2 - \varepsilon})$-time algorithm is known for $\varepsilon > 0$~\cite{min-plus-1, min-plus-2}. However, the convolution in the $(+, \times)$-ring can be solved in $O(n \log n)$-time via FFT. This resembles the setting we considered throughout the paper. Indeed, Węgrzycki outlines in his PhD thesis a bounded-input algorithm running in time $\widetilde O(n W)$, where $W$ is the largest input value~\cite[Lemma~5.7.2]{karol_phd}. Applying the same key idea as for the \NP-hard problems we considered, we can obtain an $\widetilde O_\DC(n)$-time algorithm for the min-plus \emph{self}-convolution problem, i.e., $a = b$, and the input sequence, when regarded as a set, has small doubling. The same setting holds true for the min-sum subset convolution~\cite{fsc}.

It is thus interesting to ask how the framework behaves in more weighted applications, such as (specialized) $(\min, +)$ matrix products~\cite{special-min-plus-prod} or even \textsc{APSP}, where---similar to the \NP-hard case studied here---the objective value is an additive combination of input weights.

\paragraph{Acknowledgments.} The author thanks the anonymous reviewers of STACS'25 and STACS'26, whose detailed comments visibly improved the quality of the presentation. 

\bibliography{mind-the-gap}
\bibliographystyle{alphaurl}

\appendix

\newpage

\section{Pairing Function's Monomorphism}\label{appendix:mono}

\PAIRING*
\begin{proof}
    We first prove that $\kappa$ is a homomorphism and then prove that it is also injective.~\\\newline
    \textbf{Homomorphism.}~Let $\alpha = \tup{\alpha_1, \ldots, \alpha_d}$ and $\beta = \tup{\beta_1, \ldots, \beta_d}$ be two GAP-coordinates. Their entrywise addition $\alpha \oplus \beta$ is defined as\[
        \alpha \oplus \beta = \tup{\alpha_1 + \beta_1, \ldots, \alpha_d + \beta_d}.
    \]
    At this point, we use the fact that $D$ is well-defined, i.e., the dimension bounds are not exceeded: $\alpha_i + \beta_i \leq L_i, \forall i \in [d]$. (This is guaranteed by the fact that we enlarge the dimension bounds before running the bounded-input algorithm; compare line~\ref{line:3} in Alg.~\ref{algo:meta}). Formally, we have to prove that
    \begin{equation}
        \label{eq:kappa_to_prove}
        \kappa\left(d, \tup{L_i}_{i\in [d]}, \alpha \oplus \beta\right) = \kappa\left(d, \tup{L_i}_{i\in [d]}, \alpha\right) + \kappa\left(d, \tup{L_i}_{i\in [d]}, \beta\right).
    \end{equation}
    In the sequel, we use the notation $\kappa_{d-1}(x)$ to denote that $\kappa$ is applied only on the first $d - 1$ dimensions of a tuple of GAP coordinates $x$, i.e., $\kappa\left(d-1, \tup{L_i}_{i\in[d-1]}, \tup{x_i}_{i\in[d-1]}\right)$; analogously, $\kappa_{d}$ will refer to the first $d$ dimensions.
    To prove Eq.~\eqref{eq:kappa_to_prove}, we use induction on the number of dimensions. The base case, $d = 1$, follows by construction. Otherwise, assume Eq.~\eqref{eq:kappa_to_prove} holds for $d - 1$ dimensions (IH1), i.e.,
    \begin{align*}
        &\kappa_{d-1}(\alpha \oplus \beta) = \kappa_{d-1}(\alpha) + \kappa_{d-1}(\beta).
    \end{align*}
    Then,
    \begin{flalign*}
        &&\kappa_{d}\left(\alpha \oplus \beta\right) &= \kappa_{d}\left(\alpha\right) + \kappa_{d}\left(\beta\right)\\
        \overset{\kappa}{\iff}&& (\alpha \oplus \beta)_d + (L_d+1)\kappa_{d-1}\left(\alpha \oplus \beta\right) &= \alpha_d + \beta_d + (L_d+1)\left(\kappa_{d-1}(\alpha) + \kappa_{d-1}(\beta)\right)\\
        \overset{\text{IH1}}{\iff}&& (\alpha \oplus \beta)_d &= \alpha_d + \beta_d,
    \end{flalign*}
    which is exactly the definition of $\alpha \oplus \beta$.~\\\newline
    \textbf{Injectivity.}~We next prove the injectivity of $\kappa$, by showing that $\kappa$ actually satisfies a certain monotonicity property, outlined in the following:
    \begin{equation}
        \alpha \prec \beta \implies \kappa_d(\alpha) < \kappa_d(\beta),
        \label{eq:monotonicity}
    \end{equation}
    where ``$\alpha\prec\beta$'' means that $\alpha$ is strictly lexicographically smaller than $\beta$.
    We prove this property by induction on the number of dimensions. The base case, $d = 1$ follows by construction.
    Now, assume that property Eq.~\eqref{eq:monotonicity} holds for the first $d - 1$ dimensions (IH2), i.e., \[
        \tup{\alpha_i}_{i\in[d-1]} \prec \tup{\beta_i}_{i\in[d-1]} \implies \kappa_{d-1}(\alpha) < \kappa_{d-1}(\beta).
    \]
    To prove the original Eq.~\eqref{eq:monotonicity}, we differentiate between two cases when $\alpha \prec \beta$ holds true:
    \begin{description}
        \item[Case] $\tup{\alpha_i}_{i\in[d-1]} = \tup{\beta_i}_{i\in[d-1]}$. This means that $\kappa_{d-1}(\alpha) = \kappa_{d-1}(\beta)$. Moreover, $\alpha_d < \beta_d$ ($\ast$), otherwise $\alpha \prec \beta$ cannot be true. Hence,
        \begin{align*}
             &&\kappa_d(\alpha) &< \kappa_d(\beta)\\
        \overset{\kappa}{\iff}&& \alpha_d + (L_d+1)\kappa_{d-1}(\alpha) &< \beta_d + (L_d+1)\kappa_{d-1}(\beta)\\
        \oset{\text{Case}}{\iff}&& \alpha_d &< \beta_d,
        \end{align*}
        which is true due to ($\ast$).
        \item[Case] $\tup{\alpha_i}_{i\in[d-1]} \prec \tup{\beta_i}_{i\in[d-1]}$. In this case, there is no relation between $\alpha_d$ and $\beta_d$.\footnote{To see why this is the case, note that in an English dictionary ``root'' $\prec$ ``rope'', even though the last letter of ``rope'', ``e'', comes before the ``t'' of ``root''.} Hence, we can apply IH2 and use the fact that the difference $\kappa_{d-1}(\beta) - \kappa(\alpha)$ is strictly greater than 0:
        \begin{align*}
             &&\kappa_d(\alpha) &< \kappa_d(\beta)\\
        \overset{\kappa}{\iff}&& \alpha_d + (L_d+1)\kappa_{d-1}(\alpha) &< \beta_d + (L_d+1)\kappa_{d-1}(\beta)\\
        \iff&& \alpha_d - \beta_d &< (L_d+1)\left(\kappa_{d-1}(\beta) - \kappa_{d-1}(\alpha)\right)\\
        \overset{\text{IH2}}{\iff}&& \alpha_d - \beta_d &< L_d + 1,
        \end{align*}
        which is true since the difference $\alpha_d - \beta_d$ can only lie in $[-L_d, L_d]$.        
    \end{description}
    This completes the injectivity argument, hence $\kappa$ is a monomorphism. 
\end{proof}

\section{Edge-Weighted $k$-Clique with Small Weights}\label{appendix:NP}.

We show that Nešetřil-Poljak's algorithm~\cite{uw-k-clique} can be extended to support small weights. We consider the $3k$-case here; both the $3k + 1$- and $3k + 2$-cases are handled analogously. 

\begin{restatable}[]{lemma}{NP}
    \label{lemma:np}
    The \textsc{Edge-Weighted $3k$-Clique} problem on undirected graphs $G = (V, E)$ with integer edge weights in $[0, W]$ can be solved in time $O^*(n^{\omega k}\,W)$.
\end{restatable}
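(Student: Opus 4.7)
The plan is to adapt Nešetřil--Poljak's classical $3k$-clique $\to$ triangle reduction so that it carries edge weights through, and then to solve the resulting max-weight triangle subproblem by the standard polynomial embedding of $(\max,+)$ matrix multiplication. First I would build the Nešetřil--Poljak auxiliary graph $G'=(V',E')$: the vertex set $V'$ consists of the $k$-cliques of $G$ (at most $\binom{n}{k}=O(n^{k})$ of them), and two vertices $A,B\in V'$ are adjacent iff $A\cap B=\emptyset$ and every pair in $A\times B$ is an edge of $G$, so $A\cup B$ induces a $2k$-clique in $G$. Triangles of $G'$ are then in bijection with ordered partitions of $3k$-cliques of $G$ into three labelled $k$-subsets.

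Second, to carry weights, I would place on each edge $\{A,B\}\in E'$ the weight
\[
    w'(A,B) \;=\; 2\sum_{a\in A,\, b\in B} w(a,b) \;+\; \sum_{e\in E(G[A])} w(e) \;+\; \sum_{e\in E(G[B])} w(e).
\]
A short double-counting argument then shows that for any triangle $\{A,B,C\}$ of $G'$ the sum $w'(A,B)+w'(B,C)+w'(A,C)$ equals $2\sum_{e\in E(G[A\cup B\cup C])} w(e)$: each internal $k$-clique $G[X]$ is picked up by two of the three edge weights (contributing $1+1=2$), and each bipartite block $A\times B$, $B\times C$, $A\times C$ appears exactly once with coefficient $2$. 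Hence each original edge weight is \emph{doubled} in $G'$ (matching the remark $\lambda\le 2\lvert w(I)\rvert$), and finding a maximum-weight $3k$-clique in $G$ reduces, up to dividing by $2$, to finding a maximum-weight triangle in $G'$.

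Third, I would finish with a polynomial embedding. Every edge weight in $G'$ is bounded by $W' = 2k^{2}W + 2\binom{k}{2}W = O(k^{2}W)$. Replace each nonzero entry of the adjacency matrix $M$ of $G'$ by the monomial $x^{w'(A,B)}$ and compute $M^{2}$ over $\mathbb{Z}[x]$; for every edge $\{A,B\}\in E'$, the largest exponent with nonzero coefficient in $M^{2}[A,B]$, added to $w'(A,B)$, is the weight of the heaviest triangle through $\{A,B\}$. Taking the maximum over all edges of $G'$ and halving yields the answer. Computing $M^{2}$ reduces, in the standard way, to $O(W')$ Boolean matrix multiplications on $N=O(n^{k})$ vertices (one per exponent level in $[0,2W']$), for a total of $O(N^{\omega}\cdot W') = O^{*}(n^{\omega k}\,W)$, as required. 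The $3k+1$ and $3k+2$ cases follow analogously by attaching one or two extra weighted vertices to the triangle pattern.

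The main obstacle I anticipate is the weight-bookkeeping in the second step: the edge-weight formula on $G'$ has to count each edge of $G[A\cup B\cup C]$ with the \emph{same} multiplicity across the three pairwise contributions, since otherwise the triangle objective in $G'$ is not an affine function of the $3k$-clique objective in $G$, and the reduction breaks. Once this invariant is pinned down, bounding $W'$ and invoking the standard polynomial embedding of $(\max,+)$ matrix multiplication is routine.
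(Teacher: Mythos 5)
Your proposal is correct and matches the paper's argument essentially step for step: the same Nešetřil--Poljak auxiliary graph on $k$-cliques, the same redefined edge weight $w'(A,B)=2\sum_{a\in A,b\in B}w(a,b)+w(E(G[A]))+w(E(G[B]))$ doubling the $3k$-clique weight across a triangle, the same $W'=\Theta(k^2W)$ bound, and the same conclusion via bounded-weight $(\min/\max,+)$-product in $\tilde{O}(W'N^\omega)$. (Minor aside: describing the polynomial matrix product as ``$O(W')$ Boolean matrix multiplications, one per exponent level'' is not literally the right reduction, but the $\tilde{O}(W'N^\omega)$ bound you invoke is the standard one and your conclusion is unaffected.)
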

\begin{proof}
We extend Nešetřil--Poljak's construction to the edge-weighted setting. Let $H$ be the auxiliary graph whose vertices are all $k$-cliques $Y\subseteq V$ of $G$, and where
distinct $Y_1,Y_2$ are adjacent iff $Y_1\cup Y_2$ induces a $2k$-clique in $G$. Hence, $H$ has $N=\Theta(n^k)$ vertices.

Denote by $w(e)\in[0,W]$ the weight of an edge $e \in E(G)$. For a $k$-clique $Y$, set $w'(Y) = w(E(G[Y]))$, the ``internal'' weight of node $Y$, and for adjacent $Y_1,Y_2$ in $H$, set
\begin{equation*}
    w'(Y_1,Y_2) = \sum_{\substack{u\in Y_1\\ v\in Y_2}} w(uv),
\end{equation*}
representing the ``cross'' weight between $Y_1$ and $Y_2$.
If $A,B,C$ form a triangle in $H$, the total weight of the corresponding $3k$-clique
$A\cup B\cup C$ is
\begin{equation*}
    W_{\mathrm{clique}} = w'(A)+w'(B)+w'(C)+w'(A,B)+w'(B,C)+w'(C,A).
\end{equation*}
Now, since we would like to have edge weights only (in order to not have a specialized algorithm for coping with both cases), redefine the weight of the edge $\{Y_1, Y_2\}$ as:
\begin{equation*}
    w'(Y_1, Y_2) = 2 \sum_{\substack{u\in Y_1\\ v\in Y_2}} w(uv) + w'(Y_1) + w'(Y_2),
\end{equation*}
that is, we double the previously edge weight, and add the internal weights of the two nodes. Hence, for every triangle $\{A, B, C\}$ in $H$, we have
\begin{equation*}
    w'(A, B) + w'(B, C) + w'(C, A) = 2 W_{\mathrm{clique}},
\end{equation*}
so minimizing triangle weight in $H$ is equivalent to minimizing $3k$-clique weight in $G$.

Next, note that the minimizing edge weight on $H$ satisfies
\begin{equation*}
    W_H^{\max} \leq \big(2k^2 + 2\textstyle{\binom{k}{2}}\big)\,W = (3k^2-k) W = \Theta(k^2 W).
\end{equation*}
A minimum-weight triangle in an $N$-vertex edge-weighted graph with integer weights in $[0, W_H^{\max}]$ can be found via a single distance $(\min, +)$-product in time $\tilde O(W_H^{\max}\,N^{\omega})$. Instantiating $N=\Theta(n^k)$ and $W_H^{\max}=\Theta(k^2W)$ yields time $O^*(n^{\omega k}\,W)$. Finally, explicitly constructing $H$ and all $w'(\cdot,\cdot)$ values takes $O(n^{2k}\operatorname{poly}(k) \log W)$-time, which is dominated by $n^{\omega k}$ since $\omega > 2$.
\end{proof}

\end{document}